\documentclass{article}
\pdfoutput=1 
\usepackage{graphicx,amsthm,amssymb,amsmath,url}
\usepackage[thinlines]{easybmat}
\usepackage[letterpaper,hmargin=1.3in,vmargin=1.2in]{geometry}

\theoremstyle{plain} 
\newtheorem{thm}{Theorem}[section]
\newtheorem{lem}{Lemma}[section]

\newtheorem{cor}{Corollary}[section]
\newtheorem{defn}{Definition}
\newtheorem{prop}{Proposition}

\newcommand{\p}{\mathbf{p}}
\newcommand{\ot}{\tilde{\omega}}
\newcommand{\R}{\mathbb{R}}
\newcommand{\etal}{\emph{et~al.}}


\title{Small grid embeddings of 3-polytopes}

\author{  Ares {Rib\'o Mor}\thanks{Gesellschaft zur F{\"o}rderung angewandter Informatik e.V., Berlin, Germany,  \texttt{ribo@gfai.de}.
Partially supported by the
Deutsche Forschungsgemeinschaft within the European Research Training Network \emph{Combinatorics,
Geometry and Computation} (No.~GRK~588/2).}
	\and
	G\"unter Rote\thanks{Institut f\"ur Informatik, Freie Universit\"at Berlin, Germany, \texttt{rote}\texttt{@inf.fu-berlin.de}.}
	\and
	Andr\'e Schulz\thanks{Institut f\"ur Mathematsche Logik und Grundlagenforschung, Universit\"at M\"unster, \protect\url{andre.schulz@uni-muenster.de}. Partially supported by
the German Research Foundation (DFG) under grant SCHU 2458/1-1.}
}


\begin{document}

\maketitle

\begin{abstract}
We introduce an algorithm that embeds a given 3-connected planar graph as a convex 3-polytope with integer coordinates. The size of the  coordinates is bounded by $O(2^{7.55n})=O(188^{n})$. If the graph contains a triangle we can bound the integer coordinates by $O(2^{4.82n})$. If the graph contains a quadrilateral we can bound the integer coordinates by $O(2^{5.46n})$. The crucial part of the algorithm is to find a convex plane embedding whose edges can be weighted such that the sum of the weighted edges, seen as vectors, cancel at every point. It is well known that this can be guaranteed for the interior vertices by applying a technique of Tutte. We show how to extend Tutte's ideas to construct a plane embedding where the weighted vector sums cancel also on the vertices of the boundary face. 
\end{abstract}

\section{Introduction}

\paragraph{Problem Setting.}
The graph of a polytope is an abstraction from its geometric realization.
 For a 3-polytope,
 the graph determines the complete combinatorial structure.
 The graphs of 3-polytopes are characterized by Steinitz' seminal
 theorem~\cite{s-emw-22}, which asserts that they 
 are exactly the planar 3-connected graphs.


A natural question is to ask for a geometric realization of a 3-polytope when its combinatorial structure is given. One might be interested in a realization that fulfills additionally certain optimality criteria. For example a good resolution is desirable to obtain aesthetic drawings~\cite{cgt-cdgtt-96,s-dpgvr-09}. 
We address a different problem and ask for an embedding whose vertices can be placed on a small integer grid. The vertex coordinates of such an embedding can be stored efficiently.

\paragraph{Related Work.}
Suppose we are given the combinatorial structure of a 3-polytope by
a graph $G$
with $n$ vertices.
The original proof of Steinitz' theorem transforms the
3-connected planar graph $G$ into the graph of the tetrahedron by a
sequence of elementary operations.
The transformation preserves the realizability as a 3-polytope. Since all operations can be carried out in the rationals, the proof gives a method to construct a realization of a 3-polytope with integer coordinates. However, it is not easy to keep track of the size and the denominators of the coordinates, which makes it difficult to apply this approach for our problem. An alternative proof of Steinitz' theorem goes back to the Koebe-Andreev-Thurston Circle Packing Theorem (see for example Schramm~\cite{s-eupsc-91}). This
approach relies on non-linear methods, which make the (grid) size of the embedding intractable. 
A third proof of Steinitz' theorem relies on the ``liftability''  of planar barycentric embeddings. Since this \emph{barycentric approach} is based on linear methods, its construction favors computational aspects of the embedding.  
This led to a series of embedding algorithms: Hopcroft and Kahn~\cite{hk-prga-92}, Onn and Sturmfels~\cite{os-qst-94}, Eades and Garvan~\cite{eg-dspgt-95}, Richter-Gebert~\cite{rg-rsp-96}, 
Chrobak, Goodrich, and Tamassia~\cite{cgt-cdgtt-96}.
Our work also follows this paradigm.

As a first quantitative analysis of Steinitz' theorem, Onn and Sturmfels~\cite{os-qst-94} showed that integer coordinates smaller than $n^{169n^3}$ suffice to realize a 3-polytope. 
Richter-Gebert improved this bound to $O(2^{18n^2})$. A more careful analysis of Richter-Gebert's approach shows that the size of the integer coordinates can be bounded by $2^{12n^2}$~\cite{r-rcpps-06}.
%

Integer realizations with at most exponentially large 
coordinates in terms of $n$ 
were previously known for polytopes whose graph contains a
triangle (Richter-Gebert  \cite{rg-rsp-96}). We describe this method in Section~\ref{sec:case1} (p.~\pageref{sec:case1}) as  Case~1
of our embedding algorithm.
In Richter-Gebert's approach (and already in
Onn and Sturmfels~\cite{os-qst-94}),
 graphs without triangles are embedded by first embedding the polar polytope, whose graph in this case has to contain a triangle. Based on the polar, an embedding of the original polytope is constructed. However, this
operation yields coordinates with a quadratic term in the exponent.

For \emph{triangulated} 
 3-polytopes,
Das and Goodrich~\cite{dg-copdc-97} showed that they can be embedded with coordinates of size $2^{O(n)}$, using an incremental method which can be carried out in $O(n)$ arithmetic operations.
Triangulated 3-polytopes are easier to realize on the grid than general polytopes, since each
vertex can be perturbed within some small neighborhood while
maintaining the combinatorial structure of the polytope. An explicit
bound on the coordinates has not been worked out by the authors.  For
stacked polytopes a better upper bound exists \cite{z-gcsg-07}, but it
is still exponential.

\paragraph{Lower Bounds.}
Little is known about the lower bound of a grid embedding of a 3-polytope. 
An integral convex embedding of an $n$-gon in the plane needs
an area of $\Omega(n^{3})$ \cite{az-mnecd-95,a-lbvsc-61,t-ep-91,vk-mnecd-82}.
Therefore, realizing a 3-polytope with an $(n-1)$-gonal face requires
at least one dimension of size $\Omega(n^{3/2})$.

\paragraph{Two Dimensions.}
In the plane, planar 3-connected graphs  can be embedded on a very small grid.
For a crossing-free straight-line embedding  an $O(n)\times O(n)$ grid
is sufficient \cite{fpp-hdpgg-90,s-epgg-90}.
This is also true if the embedding has to be convex
\cite{bfm-cdpg-06}. A
strictly convex drawing can be realized on an $O(n^2)\times
O(n^2)$ grid \cite{br-scdpg-06}.
\paragraph{Higher Dimensions.}
Already in dimension 4, there are polytopes that cannot be realized
with rational coordinates, and a 4-polytope that can be realized
on the grid might require coordinates that are doubly exponential in
the number of its vertices. Moreover, it is NP-hard to even decide
if a lattice is a face lattice of a
4-polytope~\cite{rg-rsp-96,gz-rspu-95}.

\paragraph{Results.}
In this article we develop an algorithm that realizes $G$ as a 3-polytope with
integer coordinates not greater than $O(187{.}13^n)=O(2^{7.55n})$. 
This implies that for any $3$-polytope a combinatorially  equivalent polytope can be stored
with $O(n)$ bits per vertex. 
For the case that $G$ contains a triangle we show that $G$ admits an integer realization with no coordinate larger than $O(28{.}\bar 4^n)=O(2^{4.82n})$, if $G$ contains a quadrilateral  face, the size of the coordinates can be bounded by $O(43.99^n)=O(2^{5.46n})$. The most difficult part of the algorithm is to locate the boundary face of the plane embedding such that a lifting into $\R^3$ exists. This problem can be reduced to a non-linear system which is most complex when $G$ contains neither a triangle nor a quadrilateral face.


Partial results containing
the essential ideas for graphs with quadrilateral faces (Case~2
of Section~\ref{sec:case2}) were presented by the second author
at the workshop \emph{The Future of Discrete Mathematics}
at {\v Sti\v r\'\i n} Castle, Czech Republic, in May 1997.
The results of this paper were presented in a different form
at the 23rd Annual Symposium on Computational Geometry
 in Gyeongju, Korea, in June 2007~\cite{rrs-epsg-07}. Since then, we were able to simplify the computation of the explicit bounds with help of Lemma~\ref{lem:zsize}. The simplification yields slightly different bounds. 
By improving the bound of Lemma~9 in~\cite{rrs-epsg-07} by a polynomial factor (now  Lemma~\ref{lem:foresttree}) 
we obtain better bounds in the end.
However, our analysis could be further improved with help of the more complicated construction of~\cite{rrs-epsg-07}. 
Since the improvement would only result in a constant factor we decided to present the simpler and more elegant analysis. 

 A follow-up work~\cite{s-dpgvr-09} extends the techniques of this article
and studies more general barycentric embeddings.
With help of these modifications, a grid embedding with $x$-coordinates
smaller than $2n$ can be constructed. The small $x$-coordinates are
realized at the expense of the size of the $y$ and $z$-coordinates,
which are bounded by $2^{O(n^2\log n)}$.

\paragraph{Remark:}
Most recently, Buchin and Schulz~\cite{BS10} improved the upper bound for the maximum number of spanning trees contained in a planar graph. 
This has a direct consequence for our results, since we obtain the bound for the necessary grid size in terms of this quantity. In particular, the new bounds of~\cite{BS10} yield that our algorithm requires a grid of size $O(147{.}71^n)=O(2^{7.21n})$ (general case), $O(39{.}87^n)=O(2^{5.32n})$ ($G$ contains a quadrilateral face), and $O(27{.}94^n)=O(2^{4.81n})$ ($G$ contains a triangular face).

%

\section{Lifting Planar Graphs}
\label{sec:lifting}

Let $G=(V,E)$ be a $3$-connected planar graph with vertex set $V=\{v_{1},\ldots,v_{n}\}$ embedded in the plane with straight edges and no crossings.
The coordinates of a vertex $v_{i}$ in the (plane) embedding are called $\p_{i}:=(x_{i},y_{i})^{T}$, the whole embedding is denoted as $G(\p)$. 
Let $h\colon V\to \R$ be a height assignment for the vertices in $G$. We write $z_{i}$ for $h(v_{i})$.
If the vertices $(x_{i},y_{i},z_{i})$ of every face of $G$ lie on a common plane, we call the height assignment $h$ a \textit{lifting} of $G(\p)$.

\begin{defn}[Equilibrium, Stress]
  An assignment $\omega \colon E\to \R$ of scalars \textup(denoted as
  $\omega(i,j)=\omega_{ij}=\omega_{ji}$\textup) to the edges of $G$ is
  called a \emph{stress}.
\begin{enumerate}
\item A vertex $v_i$ is in equilibrium in $G(\p)$, if
\begin{align}\label{equ:equilibrium}
\sum_{j:(i,j) \in E} \; \omega_{ij}(\p_i-\p_j) \;
= \; \mathbf{0}.
\end{align}
\item The embedding $G(\p)$ is in equilibrium if all vertices
  are in equilibrium.
\item If $G(\p)$ is in equilibrium for the stress $\omega$, then
  $\omega$ is called an equilibrium stress for $G(\p)$.
\end{enumerate}
\end{defn}
It is well known that equilibrium stresses and liftings are related.
 Maxwell observed
in the 19th century
 that there is a correspondence between
embeddings with equilibrium stress and projections of 3-dimensional
polytopes \cite{m-rfdf-64}. 
There are different versions of Maxwell's
theorem. For the scope of this article the following formulation is the most
suitable. 
\begin{thm}[Maxwell,
  Whiteley] Let $G$ be a planar 3-connected graph with embedding $G(\p)$ and 
designated face $f_1$. There exists a correspondence between
\begin{itemize}
\item[\rm A)] equilibrium stresses $\omega$ on $G(\p)$,
\item[\rm B)] liftings of $G(\p)$ in $\R^3$, where face $f_1$ lies in the
  $xy$-plane.
\end{itemize}
\end{thm}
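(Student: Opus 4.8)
The plan is to establish the correspondence between equilibrium stresses and liftings via the classical Maxwell–Cremona construction, making the bijection explicit. I would work locally, face by face, and then verify global consistency.

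The direction from liftings to stresses is the more computational one, so let me start there. Given a lifting $h$ with heights $z_i$, each face $f$ of $G(\p)$ is planar in $\R^3$, so its lifted vertices lie on an affine plane $z = a_f x + b_f y + c_f$; equivalently, each face carries a gradient vector $(a_f, b_f)$. For an edge $(i,j)$ shared by two faces $f$ and $g$, the two planes agree along the projected edge, so their difference $(a_f - a_g, b_f - b_g)$ is a vector orthogonal to the edge direction $\p_i - \p_j$. I would therefore \emph{define} the stress on edge $(i,j)$ to be the signed magnitude of this gradient jump divided by $|\p_i - \p_j|$, with an appropriate sign convention. The key local computation is then to check equilibrium at each vertex: summing $\omega_{ij}(\p_i - \p_j)$ around $v_i$ telescopes over the faces incident to $v_i$, because each face's gradient contributes to two consecutive edges, and the cyclic sum of gradient differences around a vertex is zero. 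This is where I expect the bulk of the bookkeeping to live, since one must track orientations carefully and handle the designated face $f_1$, which lies in the $xy$-plane and hence has zero gradient.

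For the reverse direction, from a stress $\omega$ to a lifting, I would integrate. Fix the outer face $f_1$ in the $xy$-plane, assign it gradient $(a_{f_1}, b_{f_1}) = (0,0)$, and then define the gradients of the remaining faces by summing the prescribed edge contributions along any path in the dual graph from $f_1$ to the target face. The equilibrium condition at each interior vertex is exactly the statement that going around a vertex produces no net change in gradient, i.e., the dual path-sum is well defined (path-independent). Once all face gradients are consistently determined, I would reconstruct the heights $z_i$ by integrating the gradients along edges of $G$; the equilibrium conditions again guarantee that this height assignment is single-valued. Because $G$ is $3$-connected and planar, its faces and dual are well behaved, which ensures the integration is globally consistent and the resulting $h$ is a genuine lifting.

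The main obstacle will be the consistency (path-independence) argument: showing that equilibrium at every vertex is precisely equivalent to the gradients being well defined around each vertex, and that this local condition globalizes over all of $G(\p)$. The cleanest way I would handle this is to phrase the correspondence as an isomorphism of vector spaces — the space of equilibrium stresses and the space of liftings fixing $f_1$ in the $xy$-plane — and verify that the two constructions above are mutually inverse linear maps. Establishing that the cyclic gradient differences around a vertex sum to zero iff the vertex is in equilibrium is the crux; the rest is linear-algebraic bookkeeping and sign management. I would lean on the planarity and $3$-connectivity of $G$ only to guarantee that the face structure is a genuine cell decomposition of the sphere (after choosing $f_1$ as the outer face), so that the dual graph is connected and the integration arguments go through without topological obstructions.
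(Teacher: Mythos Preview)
The paper does not actually prove this theorem: it attributes the A$\Rightarrow$B direction to Whiteley and the full correspondence to the classical Maxwell--Cremona theory, citing Crapo--Whiteley and Connelly--Demaine--Rote, and then simply records the construction via equations~\eqref{eq:perpendicular}--\eqref{equ:liftingit2}. So there is no in-paper proof to compare against.

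Your sketch is the standard Maxwell--Cremona argument and is consistent with the construction the paper describes. In particular, your gradient-jump formula is exactly the paper's equation~\eqref{eq:perpendicular}, $\mathbf{a}_l-\mathbf{a}_r=\omega_{ij}(\p_i-\p_j)^\perp$, and your ``integrate along a dual path'' recipe is precisely the iterative scheme \eqref{equ:liftingit}--\eqref{equ:liftingit2}. One small point worth tightening: once the face gradients $\mathbf{a}_f$ are fixed, you do not need a separate integration of heights ``along edges of $G$''; rather, you determine the affine offsets $d_f$ (again along a dual path, via \eqref{equ:liftingit2}), and then each vertex height is read off from any incident face plane. Consistency of the $d_f$'s across a dual cycle around a primal vertex $v_i$ amounts to $\sum \omega_{ij}\langle \p_i,\p_j^\perp\rangle=0$ over the edges at $v_i$, which follows from equilibrium together with the gradient consistency you already established. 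With that adjustment, your outline is correct and matches the classical proof the paper is citing.
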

The proof that A induces B (which is the important direction for our purpose) is  due to Walter Whiteley \cite{w-mspp-82}. The Maxwell-Cremona correspondence finds interesting applications in different areas (see for example
Hopcroft and  Kahn~\cite{hk-prga-92}, and Connelly, Demaine and Rote~\cite{cdr-spcpc-03}).

To describe a lifting, we have to specify for each face $f_{i}$ of the
graph 
the plane $H_{i}$ on which it lies.
 We define $H_i$ by the two
parameters $\mathbf{a}_i$ and $d_i$. The plane $H_i$ is characterized
by the function that assigns to every point $\p$ in the plane a third coordinate by
\begin{equation}
  \label{eq:plane}
H_i\colon \mathbf{p}\mapsto \langle \mathbf{p},\mathbf{a}_i\rangle +
d_i.  
\end{equation}
Here, $\langle \cdot ,\cdot \rangle$ denotes the dot product.
The correspondence between liftings and stresses comes from the
observation that the ``slope difference'' $\mathbf{a}_l-\mathbf{a}_r$
between two adjacent faces $f_l$ and $f_r$ is perpendicular to the
edge $\mathbf{p}_i-\mathbf{p}_j$ that separates them:
\begin{equation}
  \label{eq:perpendicular}
    \mathbf{a}_l-\mathbf{a}_r  =  \omega_{ij}
(\mathbf{p}_i-\mathbf{p}_j)^\perp,
\end{equation}
for some scalar $\omega_{ij}\in \mathbb{R}$.
Here, $\p^{\bot}:=\binom{-y}x$ denotes the vector $\p=\binom xy$
rotated by 90 degrees.
It is not hard to show that these numbers $\omega_{ij}$ form an
equilibrium stress.

The other direction, the computation of the lifting of $G(\p)$ induced by $\omega$ is straightforward,
see Crapo and Whiteley~\cite{cw-psspp-93}.
We follow the presentation of  Connelly, Demaine and Rote \cite{cdr-spcpc-03} for the computation of the lifting. 

The parameters $\mathbf{a}_i$ and $d_i$ can be computed by the following iterative method:
We pick $f_{1}$ as the face that lies in the $xy$-plane, and set $\mathbf{a}_1=\binom00$ and $d_0=0$. Then we lift the remaining faces one by one. This is achieved by selecting a face $f_{l}$ that is incident to an already lifted face $f_{r}$. 
Let $(i,j)$ be the common edge of $f_{l}$ and $f_{r}$. Assume that in $G(\p)$ the face  $f_{l}$ lies left of the  directed edge $ij$, and $f_{r}$ lies right of it. The parameters of $H_{l}$ can be computed by
 \begin{align}\label{equ:liftingit}
  \mathbf{a}_l & =  \omega_{ij}
(\mathbf{p}_i-\mathbf{p}_j)^\bot + \mathbf{a}_r, \\\label{equ:liftingit2}
d_l &=  \omega_{ij}\langle
\mathbf{p}_i,\mathbf{p}_j^\bot\rangle+d_r.
\end{align}
The formula~\eqref{equ:liftingit} comes directly from~\eqref{eq:perpendicular}, and
\eqref{equ:liftingit2} comes from the fact that the two planes must intersect above $\p_i$ and $\p_j$.

The sign of the stresses allows us to say something about the curvature of the lifted graph. 
According to~\eqref{eq:perpendicular} and \eqref{equ:liftingit}, the sign of $\omega_{ij}$ that separates $f_{l}$ and $f_{r}$ tells us if the lifted face $f_{l}$ lies below or above $H_{r}$.
As a consequence we obtain the following:
\begin{prop}\label{obs:signstress}
Let $G(\mathbf{p})$ be a straight-line embedding of a planar 3-connected graph
 $G$ with equilibrium stress. 
If the stresses on the boundary edges are
negative and all other stresses positive then the
lifting induced by
such equilibrium stress results in a convex 3-polytope. 
\end{prop}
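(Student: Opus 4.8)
The plan is to apply the Maxwell--Whiteley theorem to obtain the lifting $h$ induced by $\omega$, with $f_1$ in the $xy$-plane and each remaining face $f_i$ carried by the plane $H_i(\p)=\langle\p,\mathbf a_i\rangle+d_i$ from \eqref{eq:plane}, and then to certify convexity locally, one edge at a time, exactly as the sign of the stress predicts. For an edge $(i,j)$ with $f_l$ to its left and $f_r$ to its right, I would study the affine slope-difference $\delta(\p):=H_l(\p)-H_r(\p)$. By \eqref{eq:perpendicular} its gradient is $\mathbf a_l-\mathbf a_r=\omega_{ij}(\p_i-\p_j)^{\perp}$, and substituting \eqref{equ:liftingit}--\eqref{equ:liftingit2} one verifies $\delta(\p_i)=\delta(\p_j)=0$. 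Thus $\delta$ vanishes along the entire line through $\p_i$ and $\p_j$ --- the two planes meet above the common edge, as they must --- and off that line $\delta$ is a signed multiple of the distance to it.

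The key observation announced before the statement then becomes a short sign computation. The vector $(\p_i-\p_j)^{\perp}$ points from the edge into the half-plane on the \emph{right} of the directed edge $i\to j$, i.e.\ toward $f_r$; hence $\nabla\delta$ points toward $f_r$ precisely when $\omega_{ij}>0$. Since $\delta$ grows in the direction of its gradient and vanishes on the edge line, on the side carrying $f_l$ we get $\operatorname{sign}\delta=-\operatorname{sign}\omega_{ij}$. In particular $\omega_{ij}>0$ forces $H_l<H_r$ on the $f_l$-side, i.e.\ the lifted face $f_l$ lies below the supporting plane $H_r$ of its neighbour (and, symmetrically, $f_r$ lies below $H_l$): across a positively stressed edge the two faces fold the same way, forming a concave crease --- exactly the behaviour of the upper boundary of a convex solid lying beneath the surface.

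I would then read off the two edge types. At an interior edge both incident faces lie below one another's supporting planes, so the cap formed by the lifted interior faces has a locally concave crease along every interior edge; write $g$ for the piecewise-linear function over the region $D$ enclosed by the boundary cycle whose graph is this cap. For a boundary edge, orient it so that the outer face $f_1$ lies on the right; then $H_r\equiv 0$ and $\delta=H_l$ on the interior side, so $\omega_{ij}<0$ gives $H_l>0$ there: every interior face meeting the boundary rises strictly above the flat base $f_1$. The candidate polytope is the solid bounded below by $f_1$ and above by $g$.

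The main obstacle is the passage from these purely local (per-edge) statements to global convexity of the solid, which a naive ``concave function over $D$'' argument cannot deliver on its own, since $D$ is not assumed convex in advance. I would resolve this by working with the closed surface directly: because $G$ is planar and $3$-connected its faces tile a $2$-sphere, so the lifted interior faces together with $f_1$ form a closed polyhedral surface, and I have shown that every one of its edges is locally convex (interior edges by the local concavity of $g$, boundary edges because the cap lies above the flat base). Invoking the standard local-to-global convexity principle for closed polyhedral surfaces then shows the surface bounds a convex body --- which in particular forces $D$, the unique bottom face, to be a convex polygon --- and Steinitz' theorem guarantees that the resulting object is a genuine, non-degenerate $3$-polytope with the prescribed combinatorics. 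This completes the proof.
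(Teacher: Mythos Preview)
The paper does not give a formal proof of this proposition; it simply presents it as a consequence of the one-sentence observation preceding it, namely that the sign of $\omega_{ij}$ determines, via~\eqref{eq:perpendicular} and~\eqref{equ:liftingit}, whether the lifted face $f_l$ lies above or below the plane $H_r$. Your argument is exactly this idea carried out in full: you compute the affine difference $\delta=H_l-H_r$, read off its sign on each side of the edge from the sign of $\omega_{ij}$, and conclude local convexity at every edge. In that sense you and the paper take the same approach, but you have written out what the paper leaves implicit.

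Where you go genuinely further is in flagging the local-to-global step. The paper says nothing about why local convexity at every edge yields a convex $3$-polytope; you correctly identify this as the nontrivial part and appeal to the standard principle that a closed polyhedral $2$-sphere all of whose edges are locally convex bounds a convex body. That is the right tool here.

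Two minor remarks. First, your final appeal to Steinitz' theorem is unnecessary and mildly circular, since this proposition is itself a step toward proving Steinitz' theorem constructively; the local-to-global principle you already invoked gives you the convex polytope with the prescribed facial structure directly. Second, non-degeneracy (adjacent faces lying on distinct planes) follows immediately from $\omega_{ij}\neq 0$ and~\eqref{eq:perpendicular}, so the strict convexity at each edge is automatic; you use this implicitly and could state it in a line.
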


\begin{lem}\label{lem:scaling1}
If $G(\p)$ has integer coordinates only and the equilibrium stress is integral on all interior edges, then the $z$-coordinates
of the lifted embedding are also integers.
\end{lem}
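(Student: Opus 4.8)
The plan is to read the heights straight off the iterative lifting formulas \eqref{equ:liftingit} and \eqref{equ:liftingit2} and to propagate integrality through the iteration. The first thing I would record is that the two geometric quantities occurring in those formulas are automatically integral once the coordinates are: since $\p_i-\p_j\in\mathbb{Z}^2$, its rotation $(\p_i-\p_j)^\bot$ again lies in $\mathbb{Z}^2$, and $\langle\p_i,\p_j^\bot\rangle=y_ix_j-x_iy_j\in\mathbb{Z}$. Thus crossing an edge $(i,j)$ changes $\mathbf a$ by $\omega_{ij}(\p_i-\p_j)^\bot$ and $d$ by $\omega_{ij}\langle\p_i,\p_j^\bot\rangle$, and both increments are integral as soon as $\omega_{ij}$ is an integer. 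Once every relevant face carries an integral parameter pair $(\mathbf a_k,d_k)$, the height $z_i=\langle\p_i,\mathbf a_k\rangle+d_k$ of each vertex $v_i$ lying on that face is an integer as well, being an integer combination of integers.

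The key step is therefore to lift all faces while crossing \emph{interior} edges only, so that just the integral interior stresses enter the iteration. I would initialize the iteration at some interior face with $\mathbf a=\binom00$ and $d=0$ and then propagate. This succeeds because the dual of a $3$-connected planar graph is again $3$-connected, so deleting the vertex corresponding to the boundary face $f_1$ leaves a connected graph whose edges are exactly the interior edges of $G$. A spanning tree of that graph visits every face except $f_1$, and along it every increment is integral, so all these faces receive integral $(\mathbf a_k,d_k)$. Since each vertex of $G$ is incident to at least three faces and hence to at least one face other than $f_1$, the previous paragraph then assigns every vertex an integral height.

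The main obstacle is that this propagation cannot be anchored at $f_1$ itself: all edges incident to $f_1$ are boundary edges, whose stresses may be fractional, so the lift must be routed \emph{around} $f_1$ through the interior edges as above. Since a lifting is determined by $\omega$ only up to a global affine function, I would exploit this freedom by choosing the integral anchor just described; the fact that $\omega$ is an equilibrium stress on all of $G(\p)$, including its boundary vertices, is precisely what guarantees that this interior lifting closes up into a genuine lifting in which the vertices of $f_1$ become coplanar. The one delicate point is the reconciliation with the normalization that places $f_1$ in the $xy$-plane: writing the height of an arbitrary vertex reached across a single boundary edge $(i,j)$ as $\omega_{ij}\cdot 2\cdot\mathrm{area}(\p_i,\p_m,\p_j)+(\text{integer})$, integrality of the first term—despite a possibly fractional $\omega_{ij}$—has to be extracted from the equilibrium conditions at the boundary vertices, and this is where I expect the real work to lie.
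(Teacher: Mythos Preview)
Your first two paragraphs are correct and essentially coincide with the paper's proof: anchor the lifting at an \emph{interior} face with $\mathbf a=\binom00$, $d=0$, propagate through interior edges only (where each $\omega_{ij}\in\mathbb Z$), and conclude that every face other than the outer one carries integral parameters $(\mathbf a_k,d_k)$; since every vertex lies on at least one such face, its height $\langle\p_i,\mathbf a_k\rangle+d_k$ is an integer. Your argument is in fact more careful than the paper's, which compresses the propagation step into ``by an inductive argument'' where you invoke connectedness of the dual minus the outer face.

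The worry in your third paragraph, however, is misplaced, and there is no ``real work'' left. The lemma does not prescribe \emph{which} face sits in the $xy$-plane; in the paper's notation $f_1$ denotes whichever face is chosen to lie there, and the proof simply \emph{declares} $f_1$ to be an interior face. Your second paragraph already does exactly this, so you are done---no reconciliation with a normalization placing the outer face in the $xy$-plane is required, and no fractional boundary stress ever enters the computation. (Your use of the symbol $f_1$ for the outer face clashes with the paper's convention, in which $f_1$ is the face in the $xy$-plane and the outer face is later called $f_0$; this notational slip may be what led you to think an extra step was needed.)
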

\begin{proof}
We select an interior face as face $f_{1}$.
The gradient $\mathbf{a}_1=(0,0)^T$ and
the scalar $d_1=0$ are clearly integral. 
For all other interior faces $f_{i}$ the parameters $\mathbf{a}_i$, $d_i$ 
of the planes $H_i$ can be computed with
help of equations~\eqref{equ:liftingit} and~\eqref{equ:liftingit2}. 
By an inductive argument 
these parameters are integral as well. 
Computing the $z$-coordinate of some point $\p_{i}$
by~\eqref{eq:plane}
boils down to the multiplication and addition of integers.
\end{proof}

\section{The Grid Embedding Algorithm}
\label{sec:embedding}
\subsection{The Plane Embedding}

The embedding of $G$ as a 3-polytope uses the following high level approach.
First we embed $G$ in the plane, such that it is liftable (see Section~\ref{sec:lifting}), then we lift the embedding to $\R^3$, finally we scale to obtain integer coordinates as described in Section~\ref{sec:scaling}. The analysis of the algorithm in Section~\ref{sec:analysis} gives the new upper bound. The most challenging part is to construct a liftable 2d-embedding. 



%


An embedding is called \emph{barycentric} if every vertex that is not
on the outer face is in the barycenter of its neighbors. Tutte showed
that for planar 3-connected graphs the barycentric embedding for a
fixed convex outer face is unique~\cite{t-crg-60,t-hdg-63}. Moreover,
if embedded with straight lines, no two edges cross, and all faces are
realized as convex polygons.  In the barycentric embedding all
vertices that are not on the outer face are in equilibrium according
to the stress $\omega\equiv 1$. Our embedding algorithm uses this
special stress only, although we state the lemmas as general as possible. Since our techniques might find applications in
other settings we develop our main tools for arbitrary
stresses. (Note that Tutte's approach works with
arbitrary stresses that are positive on interior
edges, see for example Gortler \etal~\cite{ggt-dofma-06}.)


We describe now how to compute the barycentric embedding of $G$. 
Let $f_0$ be a face of $G$ that we picked as the outer face, and let 
$k$ be the number of vertices in $f_0$. For simplicity we want $k$ as small as 
possible. 
Euler's formula implies that every planar and $3$-connected graph has a face $f_{0}$ with $k\leq 5$ edges. 
We assume that the vertices in $G$ are labeled such that the first $k$ vertices
belong to $f_0$ in cyclic order. 
Let $B:=\{1,\ldots,k\}$ be the index set of the boundary vertices and let $I:=\{k+1,\ldots,n\}$ denote the index set
of the interior  vertices.
The edges of $f_0$ are called \textit{boundary edges},
all other edges \textit{interior edges}. 
The stresses on the exterior edges 
 will be defined later, but since they don't matter for the barycentric embedding we set them to zero for now.

We denote with $L=(l_{ij})$ the \emph{Laplacian} matrix of $G$ (short \emph{Laplacian}), which is defined as follows
\[l_{ij}:=
\begin{cases} -\omega_{ij} & \mbox{if $(i,j)\in E$ and $i\not=j$},\\ 
\sum_{(i,j)\in E} \omega_{i,j} & \mbox{if $i=j$}, \\
 0 & \mbox{otherwise}.
\end{cases}
\]%
For the special ``weights'' $\omega\equiv 1$ the Laplacian equals the
negative adjacency matrix of $G$.
with vertex degrees on the diagonal.
We subdivide $L$ into block
matrices indexed by the  sets $I$ and $B$, and obtain $L_{IB}$, $L_{BI}$, $L_{BB}$, and $L_{II}$.
The matrix $L_{II}$ is called the \textit{reduced Laplacian matrix}  of $G$. For convenience we write $\bar L$ instead of $L_{II}$
\paragraph{Example.}
  \begin{figure}
    \centering
    \includegraphics{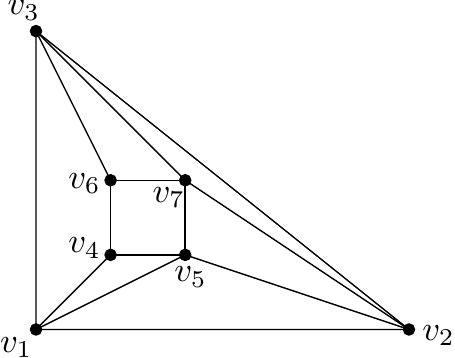}
    \caption{A small example graph.}
    \label{fig:smallexample}
  \end{figure}
Consider the graph of Figure~\ref{fig:smallexample},
with $B=\{1,2,3\}$ and $I=\{4,5,6,7\}$. We have
$$
L=
\left(\begin{BMAT}(@)[2.5pt]{rrr.rrrr}{bbb.bbbb}
  2&0&0&-1&-1&0&0\\
  0&2&0& 0&-1&0&-1\\
  0&0&2& 0& 0&-1&-1\\
 -1&0&0  & 3&-1&-1&0\\
 -1&-1&0 &-1& 4&0&-1\\
  0&0&-1 &-1& 0&3&-1\\
  0&-1&-1& 0&-1&-1&4\\
\end{BMAT}\right)
=
\begin{pmatrix}
  L_{BB} & L_{BI}\\
  L_{IB} & L_{II}\\
\end{pmatrix},
\text{ with }
\bar L=L_{II}=
\left(\begin{BMAT}(@)[2.5pt]{rrrr}{bbbb}
 3&-1&-1&0\\
-1& 4&0&-1\\
-1& 0&3&-1\\
 0&-1&-1&4\\
\end{BMAT}\right)
.$$
In the example the presence of the boundary edges $(1,2)$, $(1,3)$,
and $(2,3)$ is not reflected in the Laplacian, because the stress is set to zero on the boundary.

The location of the boundary vertices is given by the vectors 
$\mathbf{x}_B=(x_1,\ldots,x_k)^T$ and  $\mathbf{y}_B=(y_1,\ldots,y_k)^T$.
Since every vertex should lie at the (weighted) barycenter of its neighbors, the coordinates of the interior vertices   $\mathbf{x}_I=(x_{k+1},\ldots,x_n)^T$ and
  $\mathbf{y}_I=(y_{k+1},\ldots,y_n)^T$ have to satisfy the  equilibrium condition
\eqref{equ:equilibrium} for the stress $\omega$. In particular, 
the equations $\bar L \mathbf{x}_I + L_{IB}\mathbf{x}_B=\mathbf{0}$ and  $\bar L \mathbf{y}_I + L_{IB}\mathbf{y}_B=\mathbf{0}$ have to hold. 
Thus, we can express the interior coordinates as
\begin{equation}\label{equ:tuttex}
\begin{split}
  \mathbf{x}_I&=-\bar{L}^{-1}L_{IB}\mathbf{x}_B,\\
  \mathbf{y}_I&=-\bar{L}^{-1}L_{IB}\mathbf{y}_B.
\end{split}
\end{equation}
For non-zero weights $\omega$ the matrix $L_{II}$ is irreducible (that is, the underlying graph is connected, see Lemma~\ref{lem:technical}) and diagonally dominant. As a consequence $L_{II}$ is invertible and \eqref{equ:tuttex} has a unique solution \cite[page 363]{hj-ma-90}.

The barycentric embedding assures that the interior vertices are in equilibrium. However, to make $G(\p)$ liftable we have to guarantee the equilibrium also for the vertices on $f_0$. 
We define the vectors
$F:=\mathbf{F}_1,\ldots,\mathbf{F}_k$
as the non-resolving ``forces'', which arise at the boundary vertices and
cannot be canceled by the interior stresses:
\begin{align}\label{equ:nonresolve}
\forall i\in B \quad \sum_{(i,j) \in E } \; \omega_{ij}(\p_i-\p_j) \;
=: \mathbf{F}_i.
\end{align}
Our goal is to define the yet unassigned stresses on the boundary
edges such that they cancel the forces in $F$.  However, this is not
always possible, depending on the shape of the outer face.
In order to pick a good embedding of $f_0$, we have to know how
changing the coordinates of the outer face changes the forces in $F$.
The following lemma helps to express this dependence.

\index{substitution lemma} \index{stress!equilibrium}

\begin{lem}[Substitution Lemma] \label{lemma:1}
There are  weights $\ot_{ij} = \ot_{ji}$, for $i,j\in B$,
 independent of the location of the boundary vertices, 
 such that 
 \begin{align*}
\mathbf{F}_i= \sum_{j\in B:j\not=i}
 \ot_{ij}(\mathbf{p}_i-\mathbf{p}_j).
 \end{align*}
The weights $\ot_{ij}$ are the off-diagonal entries of  $L_{BI}\bar{L}^{-1}L_{IB}-L_{BB}$.
If $\omega$ is integral,
each $\ot$ is a  multiple of $1/ \det \bar{L}$.
\end{lem}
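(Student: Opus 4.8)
The plan is to recognize the forces $\mathbf{F}_i$ as entries of a single matrix–vector product and then to identify the relevant Schur complement as a weighted Laplacian. First I would observe that for any coordinate vector $\mathbf{x}=(x_1,\dots,x_n)^T$ the $i$-th entry of $L\mathbf{x}$ equals $l_{ii}x_i+\sum_{j\ne i}l_{ij}x_j=\sum_{(i,j)\in E}\omega_{ij}(x_i-x_j)$, so that $L\mathbf{x}$ is exactly the vector of $x$-components of the edge forces at all vertices (and likewise $L\mathbf{y}$ for the $y$-components). Splitting into the blocks indexed by $B$ and $I$, the equilibrium condition~\eqref{equ:equilibrium} for the interior vertices is precisely the vanishing of the lower block, which is~\eqref{equ:tuttex}; the upper block then gives the $x$-component of the non-resolving forces as $L_{BB}\mathbf{x}_B+L_{BI}\mathbf{x}_I$.

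Next I would substitute $\mathbf{x}_I=-\bar L^{-1}L_{IB}\mathbf{x}_B$ from~\eqref{equ:tuttex} to obtain the $x$-component of $F$ as $(L_{BB}-L_{BI}\bar L^{-1}L_{IB})\mathbf{x}_B=-M\mathbf{x}_B$, where $M:=L_{BI}\bar L^{-1}L_{IB}-L_{BB}$; the $y$-component is handled identically. To read this off as a stress supported on the boundary, I must show that $-M$ is the Laplacian of some symmetric edge weights, i.e.\ that $M$ is symmetric with vanishing row sums. Symmetry is immediate because $L$ is symmetric ($\omega_{ij}=\omega_{ji}$), so $L_{BB}=L_{BB}^T$, $\bar L=\bar L^T$, and $L_{IB}=L_{BI}^T$, which makes $M$ symmetric.

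The crux --- and the step I expect to carry the real content --- is the vanishing of the row sums of $M$. For this I would use that $L$ itself has zero row sums, i.e.\ $L\mathbf{1}=\mathbf{0}$, since each diagonal entry $l_{ii}=\sum_{(i,j)\in E}\omega_{ij}$ cancels the off-diagonal entries $-\omega_{ij}$ in its row (this holds even with the boundary stresses set to zero). Blockwise this gives $L_{IB}\mathbf{1}_B=-\bar L\mathbf{1}_I$ and $L_{BB}\mathbf{1}_B=-L_{BI}\mathbf{1}_I$; feeding the first relation into $M\mathbf{1}_B$ yields $L_{BI}\bar L^{-1}L_{IB}\mathbf{1}_B=-L_{BI}\mathbf{1}_I=L_{BB}\mathbf{1}_B$, so $M\mathbf{1}_B=\mathbf{0}$. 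With $M$ symmetric and of zero row sum, setting $\ot_{ij}:=M_{ij}$ for $i\ne j$ makes $-M$ the weighted Laplacian of the $\ot_{ij}$, so that $\mathbf{F}_i=(-M\mathbf{x}_B)_i=\sum_{j\ne i}\ot_{ij}(\p_i-\p_j)$ in each coordinate, as claimed, with $\ot_{ij}=\ot_{ji}$.

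Finally, independence from the boundary positions is clear, since $M$ is assembled from $L$ alone, which depends only on $G$ and $\omega$ and not on $\p$. For integrality I would invoke the adjugate formula $\bar L^{-1}=\operatorname{adj}(\bar L)/\det\bar L$: when $\omega$ is integral, $\operatorname{adj}(\bar L)$ has integer entries, so every entry of $L_{BI}\bar L^{-1}L_{IB}$ --- and hence of $M$ after subtracting the integer matrix $L_{BB}$ --- is an integer multiple of $1/\det\bar L$, giving the stated divisibility for each $\ot_{ij}$.
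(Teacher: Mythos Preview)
Your proof is correct and follows essentially the same route as the paper: identify the boundary forces as the Schur complement $\tilde L=L_{BB}-L_{BI}\bar L^{-1}L_{IB}$ applied to the boundary coordinates, use symmetry of $L$ for $\ot_{ij}=\ot_{ji}$, use $L\mathbf{1}=\mathbf{0}$ blockwise to get vanishing row sums, and read off the $1/\det\bar L$ denominator from $\bar L^{-1}$. Your write-up is slightly more explicit in distinguishing $\mathbf{1}_B$ from $\mathbf{1}_I$ and in invoking the adjugate formula, but there is no substantive difference.
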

\begin{proof}
Let $\mathbf{F}_x$ denote the vector $(F_1^x,\ldots,F_k^x)^T$, 
where $F_i^x$ is the $x$-component of the vector $\mathbf{F}_i$.
We rephrase \eqref{equ:nonresolve} as   
$\mathbf{F}_x=L_{BB}\mathbf{x}_B+L_{BI}\mathbf{x}_I$.
With help of \eqref{equ:tuttex} we eliminate $\mathbf{x}_I$ and 
obtain
\[ \mathbf{F}_x=L_{BB}\mathbf{x}_B-L_{BI}\bar{L}^{-1}L_{IB}\mathbf{x}_B 
 =:\tilde{L}\mathbf{x}_B.
 \]
(The matrix $\tilde L=L_{BB}-L_{BI}\bar{L}^{-1}L_{IB}$ is the Schur complement of $\bar{L}$ in $L$.)
For the $y$-coordinates, we obtain a similar formula with the same matrix $\tilde{L}$. 
We define $\ot_{ij}$ as the off-diagonal entries $-\tilde{l}_{ij}$ of
$\tilde{L}$. Since $L_{BI}=(L_{IB})^T$, the matrix $\tilde{L}$ is
symmetric and therefore $\ot_{ij}=\ot_{ji}$ holds. 

To show that the expression
$\mathbf{F}_x=\tilde{L}\mathbf{x}_B$ has the form stated in the lemma we have
to check that all row sums in $\tilde{L}$ equal 0. Let $\mathbf{1}$
denote the vector  where all entries are $1$, equivalently $\mathbf{0}$ denotes the vector that contains only zeros as entries. Since each of the last $n-k$ rows of $L$ sums up to $0$ we have $\bar{L}\mathbf{1}+L_{IB}\mathbf{1}=\mathbf{0}$; and hence 
$-\bar{L}^{-1}L_{IB}\mathbf{1}=\mathbf{1}$. Plugging this
  expression into
  $\tilde{L}\mathbf{1}=L_{BB}\mathbf{1}-L_{BI}\bar L^{-1}L_{IB}\mathbf{1}$ gives us 
$\tilde{L}\mathbf{1}= L_{BB}\mathbf{1}+L_{BI}\mathbf{1}$, which equals
$\mathbf{0}$. 
The matrix $\tilde{L}$ can be written as a rational expression whose
denominator is the determinant of $\bar L$, and thus the
weights $\ot$ are multiples of $1/ \det \bar{L}$.
\end{proof}

In linear algebra terms, the lemma can be rephrased as saying that the
Schur complement of a submatrix of a weighted Laplacian, if it exists,
has again the form of a weighted Laplacian.

The proof assumes that $\bar L$ is invertible. This is the case
whenever the graph $G$ has no connected component that is a subset
of~$I$. If such components exist, they can simply be omitted, since
they are completely disconnected from $B$ and hence have no effect on
the forces in $F$. Hence, the lemma holds for arbitrary graphs, without any connectivity assumptions.

\paragraph{Example.}
For the example of Figure~\ref{fig:smallexample},
we obtain the following substitution stresses.
$$-\tilde{L}=
\begin{pmatrix}
  -\ot_{12}-\ot_{13}  &\ot_{12}&\ot_{13}\\
  \ot_{12}&-\ot_{12} -\ot_{23}&\ot_{23}\\
  \ot_{13}&\ot_{23} &-\ot_{13}-\ot_{23}
\end{pmatrix}
=
\begin{pmatrix}
  -96/95 & 3/5 & 39/95\\
  3/5 & -6/5 & 3/5\\
  39/95 & 3/5 & -96/95\\
\end{pmatrix}
$$

We emphasize that the $\ot$ values are independent
of the location of $\mathbf{x}_B$ and $\mathbf{y}_B$: they only depend on the
combinatorial structure of $G$. In other words, the stresses
$\ot_{ij}$ contain all the necessary information about the
combinatorial structure of $G$. Thus, we have a compact
description (of size $\binom k
2$) of the structure of $G$ that 
is responsible
for the forces in $F$. We call the stresses $\ot$ \textit{substitution
  stresses} \index{substitution stresses}
to emphasize that they are used
as a substitution for the combinatorial structure of~$G$.

For the later analysis of the grid size it is necessary to bound the
size of the substitution stresses. We first state a technical lemma.
\begin{lem}\label{lem:technical}
  Removing all vertices and edges of a face $f_{0}$ from a 3-connected
  planar graph $G$ leaves a connected graph.
\end{lem}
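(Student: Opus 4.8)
The plan is to show that $G' := G \setminus f_0$ (the graph obtained by deleting the vertices and edges of the face $f_0$) is connected, using the $3$-connectivity of $G$ together with planarity. The cleanest route I would take is to argue by contradiction: suppose $G'$ splits into at least two connected components $C_1, C_2, \dots$. Since $G$ itself is connected, every component $C_m$ must be attached to the boundary cycle $f_0$ by at least one edge. The key is to exploit the planar embedding: the cycle bounding $f_0$ is a closed Jordan curve in the plane, and all of $G'$ lies on one side of it (the side not containing the outer region, if $f_0$ is taken as a face of the embedding).

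First I would set up the separator argument. If $G'$ had two distinct components $C_1$ and $C_2$, I would like to produce a vertex cut of size at most $2$ in $G$, contradicting $3$-connectivity. The natural candidates for the cut are vertices of $f_0$. Concretely, I would look at the cyclic boundary $v_1, \dots, v_k$ of $f_0$ and record, for each boundary vertex, which components of $G'$ it is adjacent to. Because the drawing is planar and each component $C_m$ sits inside the Jordan region bounded by $f_0$, the boundary vertices adjacent to a fixed component $C_m$ occupy a contiguous arc of the cycle $f_0$ — two components cannot ``interleave'' their attachment points around the cycle without forcing an edge crossing. This contiguity is the geometric heart of the argument.

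Given contiguity, the finish is combinatorial. If $C_1$ attaches to boundary vertices forming an arc, then the two vertices of $f_0$ flanking that arc (the endpoints of the arc, or the two neighbors separating $C_1$'s arc from the rest) form a set of at most two vertices whose removal disconnects $C_1$ from the remainder of $G$. Removing these $\le 2$ vertices separates the interior vertices of $C_1$ from everything else, since every path from $C_1$ to another component or to the rest of $f_0$ must pass through the boundary cycle and hence through one of these two flanking vertices. This yields a $2$-cut, contradicting that $G$ is $3$-connected. Hence $G'$ cannot have more than one component.

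The main obstacle is making the ``contiguous arc'' claim fully rigorous: one must argue from the planar embedding that the attachment points of a single component form a contiguous arc of $f_0$, and that distinct components cannot interleave. This is where planarity is genuinely used (the statement is false for non-planar $3$-connected graphs), and care is needed to handle the case where a component attaches to only one boundary vertex, or where $f_0$ is a triangle and arcs become short. I would handle these by treating the boundary cycle as a Jordan curve, placing each component strictly inside, and invoking the fact that the sequence of attachment points around the cycle, read for each component, must be an uninterrupted block — otherwise two chords to different components would be forced to cross inside the disk. Once contiguity is established, extracting the $2$-separator is routine.
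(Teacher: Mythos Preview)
Your approach is correct and takes a genuinely different route from the paper. The paper realizes $G$ as a polytope (invoking Steinitz's theorem) and then applies the standard linear-objective argument, valid for polytopes in any dimension: choose a generic linear function minimized exactly on $f_0$; every remaining vertex reaches the unique maximum along a monotone increasing path, so the complement of $f_0$ is connected. Your argument stays purely in the planar embedding, which is more elementary and avoids the mild awkwardness of citing Steinitz inside a paper whose goal is a constructive version of it; the trade-off is that the paper's argument immediately generalizes to all dimensions, while yours is specific to the plane. One point to tighten: the attachment set $N(C_m)\subseteq V(f_0)$ of a component need not be a \emph{contiguous} arc---what planarity actually guarantees is only that the attachment sets of distinct components are non-crossing as cyclic subsets of the boundary. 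The clean way to finish is to take two \emph{consecutive} attachment points $a,b\in N(C_1)$ (so the open arc $(a,b)$ contains no vertex of $N(C_1)$); a path through $C_1$ from $a$ to $b$ traps any second component $C_2$ into the region it bounds with the arc, forcing $N(C_2)\subseteq[a,b]$, and then $\{a,b\}$ is the desired $2$-separator. This is exactly the mechanism you describe, with ``contiguous arc'' replaced by the weaker non-crossing property.
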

\begin{proof}
  After realizing $G$ as a polytope, the
  claim 
  becomes a special case of the well-known statement that a graph of a
  polytope in any dimension remains connected if the vertices of some
  face are removed. This statement can be proved by defining a linear
  objective function that realizes the minimum entirely on the removed
  vertices. Every remaining vertex is connected to the maximum vertex
  by a monotone increasing path.  The objective function
  can be perturbed such that there is a unique maximum.
\end{proof}
\begin{lem} \label{lem:2}
\begin{enumerate}
\item Let $\omega$ be a stress that is positive on every interior edge. Any induced substitution stress $\ot_{ij}$ is positive.
\item Let $\omega$ be the stress that is $1$ on every interior edge. Any induced substitution stress $\ot_{ij}$ is smaller than $n-k$.
\end{enumerate}
\end{lem}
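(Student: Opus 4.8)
The plan is to treat both parts through a single identity that re-expresses each substitution stress in terms of the \emph{harmonic extension} of an indicator function on the boundary. Recall from the Substitution Lemma that $\ot_{ij}=-\tilde l_{ij}=(L_{BI}\bar L^{-1}L_{IB})_{ij}-(L_{BB})_{ij}$ for $i\neq j$. Since a face of a $3$-connected planar graph is a non-separating \emph{induced} cycle, $f_0$ has no chords; together with the convention that the boundary stresses vanish, this gives $(L_{BB})_{ij}=0$ off the diagonal, so $\ot_{ij}=(L_{BI}\bar L^{-1}L_{IB})_{ij}$. Writing $N_I(i)\subseteq I$ for the interior neighbours of a boundary vertex $i$ and expanding the matrix product, I would record the two equivalent forms
\[
\ot_{ij}=\sum_{s\in N_I(i)}\sum_{t\in N_I(j)}\omega_{is}\,\omega_{jt}\,(\bar L^{-1})_{st}
=\sum_{t\in N_I(i)}\omega_{it}\,\phi^{(j)}_t ,
\]
where $\phi^{(j)}:=-\bar L^{-1}L_{IB}e_j$ is, by \eqref{equ:tuttex}, exactly the position vector the interior vertices take when the boundary is placed at $\mathbf{x}_B=e_j$; that is, $\phi^{(j)}$ is the $\omega$-weighted barycentric (discrete harmonic) extension of the function that is $1$ at $j$ and $0$ at the other boundary vertices.

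For part~1 I would use the first form. The reduced Laplacian $\bar L$ is a symmetric nonsingular matrix with positive diagonal and nonpositive off-diagonal entries (a Stieltjes matrix), and it is irreducible because Lemma~\ref{lem:technical} guarantees that the graph induced on $I$ is connected. For such a matrix the inverse is entrywise \emph{strictly positive} by standard $M$-matrix theory (cf.~\cite{hj-ma-90}). Moreover every boundary vertex $i$ has an interior neighbour: it has exactly two neighbours along the cycle $f_0$, and $3$-connectivity forces a third neighbour, which cannot be a chord and hence lies in $I$. Thus $N_I(i)$ and $N_I(j)$ are nonempty, every factor $\omega_{is},\omega_{jt}$ is positive, and every $(\bar L^{-1})_{st}$ is positive, so the double sum is a nonempty sum of positive terms and $\ot_{ij}>0$.

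For part~2, with $\omega\equiv 1$ on the interior edges, I would use the second form together with the maximum principle for $\phi^{(j)}$. At each interior vertex $t$ the barycentric condition makes $\phi^{(j)}_t$ the average of its neighbours' values, and the boundary values lie in $\{0,1\}$, so $0\le\phi^{(j)}_t\le 1$. The hard part is the \emph{strict} bound $\phi^{(j)}_t<1$: if some interior vertex attained the value $1$, the averaging identity would force all of its neighbours to equal $1$ as well, and propagating this through the connected graph $G$ would force the boundary vertices $\neq j$ (which are fixed at $0$) to equal $1$, a contradiction. Hence $\phi^{(j)}_t<1$ for every $t\in I$, and since $N_I(i)$ is a nonempty subset of the $n-k$ interior vertices,
\[
\ot_{ij}=\sum_{t\in N_I(i)}\phi^{(j)}_t<|N_I(i)|\le n-k .
\]
I expect the two genuinely delicate points to be exactly these strictness arguments---entrywise positivity of $\bar L^{-1}$ and the strong maximum principle---both of which hinge on the connectivity supplied by Lemma~\ref{lem:technical} and on $3$-connectivity ensuring that boundary vertices have interior neighbours and that $f_0$ is chordless.
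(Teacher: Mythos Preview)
Your proof is correct, and at its core it is the same argument as the paper's, just packaged differently. The paper fixes the pair $(i,j)$, places $v_i$ at the origin and all other boundary vertices at $(1,0)$, and then reads off $\ot_{ij}$ as the $x$-component of the force $\mathbf{F}_j$, namely $\ot_{ij}=\sum_{l\in N_I(j)}\omega_{jl}(1-x_l)$; here $x_l$ is exactly the harmonic extension with $v_i\mapsto 0$ and the rest of $B\mapsto 1$, so $1-x_l=\phi^{(i)}_l$ in your notation and the two identities agree by the symmetry $\ot_{ij}=\ot_{ji}$. Both parts are then finished by the same strict maximum principle (\,$0<\phi_t<1$\,), which is precisely what the paper proves by the ``all neighbours must also be extremal, propagate via Lemma~\ref{lem:technical}'' argument. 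The one genuine methodological difference is in part~1: you invoke $M$-matrix theory to get $(\bar L^{-1})_{st}>0$, whereas the paper gets positivity directly from the same elementary maximum-principle argument you use in part~2. Your route is perfectly valid and perhaps more modular, but note that it imports a theorem where the paper is self-contained; conversely, your formulation makes the role of Lemma~\ref{lem:technical} (irreducibility of $\bar L$) and of chordlessness of $f_0$ explicit, which is a nice clarification.
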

\begin{proof}
The substitution stresses are independent of the location of $f_0$. 
Therefore, we can choose the positions for 
the boundary vertices freely.
We place vertex $v_i$ at position $(0,0)^T$ and all other boundary vertices at $(1,0)^T$.
All vertices lie on the segment between $(0,0)^T$ and $(1,0)^T$, which is the convex hull of the boundary vertices. 

We now show that all interior vertices lie in the interior of this
segment.
If an interior vertex lies at $(1,0)^T$ then all its neighbors have to lie at $(1,0)^T$ as well. Otherwise the vertex cannot be in equilibrium. But since due to Lemma~\ref{lem:technical} all interior vertices are connected by interior edges this would imply that all interior vertices must lie at $(1,0)^T$. This is a contradiction, since $v_{i}$ is also the neighbor of an interior vertex. By the same arguments one can show that no interior vertex can lie at $(0,0)^T$. 
Therefore, all interior vertices have a positive $x$-coordinate strictly smaller than $1$.

In our special embedding the force $\mathbf{F}_j$ ($j\not= i$) can be expressed as $\mathbf{F}_j=(\ot_{ij},0)^T$.
By \eqref{equ:nonresolve} we 
have $\ot_{ij}=\sum_{k \in I}\omega_{jk}(x_j-x_k)$. 
Due to the results of the previous paragraph, this sum consists of at most $|I|$  summands, which are positive, and in the case $\omega\equiv 1$ smaller than 1. Both statements of the lemma follow.
\end{proof}

We are now ready to introduce the embedding algorithm. 
As a first step we construct a 2d
embedding in equilibrium with respect to a stress $\omega$ with $\omega\equiv 1$ on the interior edges.
In order to get equilibrium on the boundary vertices as well,
we have to choose their locations and the stresses on the boundary edges appropriately.
This leads to a
non-linear system in the $2k$ unknowns $\mathbf{x}_{B}$, and $\mathbf{y}_{B}$ and the $k$
unknown boundary stresses $\omega_{12}, \omega_{23},\dotsc , \omega_{k1}$.
Let $L_0$ be the Laplacian of the graph that consists of the outer face $f_0$ only, with unknown stresses
 $\omega_{12},\omega_{23},\ldots,\omega_{k1}$
for the boundary edges. The $2k$ equations of the system are given by 
\begin{align}\label{equ:system}
L_0\mathbf{x}_B + \tilde L \mathbf{x}_B = \mathbf{0}, \quad  L_0\mathbf{y}_B + \tilde L \mathbf{y}_B = \mathbf{0}.
\end{align}
Since these equations are dependent, the system is
under-constrained. To solve it, we fix as many boundary coordinates as
necessary to obtain a unique solution.  We also have to ensure that the
solution defines a \emph{convex} face. We continue with a case
distinction on $k$.

\subsubsection*{Case 1: $G$ contains a triangular face}\label{sec:case1}
The triangular case is easy: we can position the boundary  vertices at any
convenient position (see for example~\cite{hk-prga-92}). We choose:
\begin{align}\label{triangle}
 \mathbf{p}_1= \begin{pmatrix}0 \\  0\end{pmatrix},
 \mathbf{p}_2= \begin{pmatrix}1 \\  0\end{pmatrix},
 \mathbf{p}_3= \begin{pmatrix}0 \\ 1 \end{pmatrix}.
\end{align}
\begin{lem}
If $G$ contains a triangle and we place the boundary
vertices as
stated in \eqref{triangle} then the boundary forces can be resolved.
\end{lem}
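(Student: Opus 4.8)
We need to show that when $G$ contains a triangular face $f_0$ with vertices placed at the three specific positions in~\eqref{triangle}, the boundary forces $\mathbf{F}_1,\mathbf{F}_2,\mathbf{F}_3$ can be resolved by a suitable choice of boundary stresses $\omega_{12},\omega_{23},\omega_{13}$. "Resolved" means each boundary vertex ends up in equilibrium, i.e.\ the boundary stresses acting along the triangle edges exactly cancel the non-resolving forces $\mathbf{F}_i$ computed from the interior stress $\omega\equiv 1$.

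**The plan.** The key observation is that by the Substitution Lemma (Lemma~\ref{lemma:1}), each force already has the form $\mathbf{F}_i=\sum_{j\neq i}\ot_{ij}(\p_i-\p_j)$, a combination of edge vectors pointing from $v_i$ to the other two boundary vertices. Resolving the forces means finding boundary stresses $\omega_{ij}$ so that $\mathbf{F}_i+\sum_{j\neq i}\omega_{ij}(\p_j-\p_i)=\mathbf{0}$ for each $i$; equivalently, writing $\omega_{ij}=\ot_{ij}$ for every boundary edge works formally, since then the force contributed by the boundary stress is exactly $-\mathbf{F}_i$. The real content is therefore not existence of \emph{some} scalars but that the natural choice is geometrically consistent: because the triangle is nondegenerate (the three points in~\eqref{triangle} are affinely independent), the two edge vectors at each vertex form a basis of $\R^2$, so the decomposition of $\mathbf{F}_i$ into those two directions is unique, and the scalars read off at vertex $i$ for edge $(i,j)$ must agree with those read off at vertex $j$ for the same edge. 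First I would verify this symmetry: the Substitution Lemma already gives $\ot_{ij}=\ot_{ji}$, and setting $\omega_{ij}:=\ot_{ij}$ makes the boundary Laplacian $L_0$ act as $\tilde L$ on both coordinate vectors, so the system~\eqref{equ:system} is satisfied identically.

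**Carrying it out.** Concretely, I would argue as follows. With the placement~\eqref{triangle}, the three boundary points are affinely independent, so $\mathbf{x}_B=(0,1,0)^T$ and $\mathbf{y}_B=(0,0,1)^T$ span, together with $\mathbf{1}$, all of $\R^3$. By the Substitution Lemma the matrix $\tilde L$ is a weighted Laplacian (symmetric, with zero row sums) and satisfies $\mathbf{F}_x=\tilde L\mathbf{x}_B$, $\mathbf{F}_y=\tilde L\mathbf{y}_B$. Choosing the boundary stresses equal to the substitution stresses, $\omega_{ij}:=\ot_{ij}$, makes $L_0=-\,(\text{off-diagonal part of }\tilde L)$ with matching diagonal, i.e.\ $L_0=\tilde L$ as operators on vectors with the zero–row-sum structure; hence $L_0\mathbf{x}_B+\tilde L\mathbf{x}_B$ reduces to the equilibrium condition being met once we account for the sign in~\eqref{equ:nonresolve}. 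I would check the sign bookkeeping carefully and conclude that~\eqref{equ:system} holds, which by definition means every boundary vertex is in equilibrium, so the forces are resolved.

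**Main obstacle.** The one point that needs genuine care, rather than the trivial ``just set $\omega_{ij}=\ot_{ij}$,'' is confirming that the resulting boundary stresses produce a \emph{convex} outer face with the correct sign pattern required by Proposition~\ref{obs:signstress} (negative on the boundary, positive inside). Since by Lemma~\ref{lem:2} the induced substitution stresses $\ot_{ij}$ are positive, the resolving boundary stresses $\omega_{ij}$ must be negative, which is exactly the sign needed on boundary edges; I expect the affine independence of the triangle's three vertices to be what guarantees this works without any constraint on the shape—unlike the quadrilateral and pentagon cases, three generic points always admit a unique resolution. Verifying that this sign pattern is automatic here, and that no degeneracy in $\tilde L$ (guaranteed by invertibility of $\bar L$ via Lemma~\ref{lem:technical}) spoils the argument, is the crux.
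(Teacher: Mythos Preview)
Your approach is the same as the paper's---set the boundary stresses in terms of the substitution stresses---but you have a sign error that propagates through the whole argument and makes it internally inconsistent. The equilibrium condition~\eqref{equ:equilibrium} uses $\p_i-\p_j$, not $\p_j-\p_i$; with the paper's convention the cancellation requires
\[
\sum_{j\in B,\,j\ne i}\bigl(\ot_{ij}+\omega_{ij}\bigr)(\p_i-\p_j)=\mathbf{0},
\]
so the correct choice is $\omega_{ij}=-\ot_{ij}$, which is exactly the paper's one-line proof. With your assignment $\omega_{ij}=\ot_{ij}$ the boundary Laplacian $L_0$ equals $\tilde L$ (not its negative), and then $L_0\mathbf{x}_B+\tilde L\mathbf{x}_B=2\tilde L\mathbf{x}_B$, which is not zero. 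You even notice the problem yourself at the end when you write that the boundary stresses ``must be negative'' while $\ot_{ij}>0$ by Lemma~\ref{lem:2}; that observation directly contradicts your own choice $\omega_{ij}=\ot_{ij}$.

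Two further remarks. First, the affine-independence discussion is unnecessary here: once you invoke the Substitution Lemma, $\mathbf{F}_i$ is \emph{already} expressed in the edge directions with symmetric coefficients $\ot_{ij}=\ot_{ji}$, so there is nothing to ``match up'' between vertices---the choice $\omega_{ij}=-\ot_{ij}$ works identically for any placement, degenerate or not. Second, the convexity and sign-pattern considerations you raise under ``Main obstacle'' are not part of this lemma at all; the statement only asks that the forces be resolved, and the paper handles the sign pattern separately in Section~\ref{sec:scaling}.
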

\begin{proof}
We embed $G$ as barycentric embedding and calculate the substitution
stresses. 
After setting $\omega_{12}=-\ot_{12},\omega_{23}=-\ot_{23},\omega_{13}=-\ot_{13}$
all points are in equilibrium. 
\end{proof}

\subsubsection*{Case 2: $G$ contains a quadrilateral but no triangular face}\label{sec:case2}
If $f_0$ is a quadrilateral we have to fix some 
coordinates of the boundary vertices such that
it is possible to cancel the forces in $F$. 
We used computer algebra software to experiment with various
possibilities to constrain the coordinates and solve the non-linear
system~\eqref{equ:system}.
A unique solution can be obtained by setting 
\begin{align}\label{equ:4goncoordinates}
 \mathbf{p}_1= \begin{pmatrix}0 \\  0\end{pmatrix},
 \mathbf{p}_2= \begin{pmatrix}1 \\  0\end{pmatrix},
 \mathbf{p}_3= \begin{pmatrix}2 \\ y_3 \end{pmatrix},
 \mathbf{p}_4= \begin{pmatrix}0 \\ 1 \end{pmatrix},
\end{align}
with
\begin{align}
y_3 = \frac{\ot_{24}}{2\ot_{13}-\ot_{24}}.
\label{equ:sol4}
\end{align}
The solution of the equation system~\eqref{equ:system} provides also the stresses on the boundary edges. These stresses are not necessary for our further computations; we mention them here for completeness only.
\begin{equation}
\begin{split}
\omega_{12}& =  -2\ot_{13}-\ot_{12}, \\
\omega_{23}& =  \ot_{24}-2\ot_{13}-\ot_{23}, \\
\omega_{34}& =  -\frac{\ot_{24}}{2}-\ot_{34}, \\
\omega_{14}& =  \frac{\ot_{24}\ot_{13}}{\ot_{24}-2\ot_{13}}-\ot_{14}.
\end{split}
\label{equ:sol4-2}
\end{equation}
We assume that $\ot_{13}\geq\ot_{24}$.
(Otherwise we cyclically relabel the vertices on $f_0$.) 
Since $\omega\equiv 1$ on the interior edges the substitution stresses are positive by Lemma~\ref{lem:2}. Under this assumption we can deduce that $0<y_3\leq 1$. Hence,  $f_0$ forms a \emph{convex} face.

Note that the substitution stresses $\tilde \omega_{ij}$ between adjacent
vertices (on the boundary) are irrelevant. The 
forces resulting by the boundary stresses $\ot_{ij}$ 
can be directly canceled by the
corresponding stresses $ \omega_{ij}$. 
This can also be observed by looking at the solution  
of the corresponding equation system:
Boundary stresses do not appear in the solution for $y_3$. (Furthermore 
the sum $\ot_{ij}+\omega_{ij}$ for boundary edges $(i,j)$
does not depend on any other boundary stress either.)

\begin{lem}
If $G$ contains a quadrilateral and we place the boundary vertices as
stated in \eqref{equ:4goncoordinates} and \eqref{equ:sol4}, 
then $f_0$ forms a convex quadrilateral 
and the boundary stresses
 \eqref{equ:sol4-2}
 cancel the
forces~$F$.
\qed
\end{lem}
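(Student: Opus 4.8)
The plan is to verify the two assertions of the lemma directly from the solution given in equations~\eqref{equ:4goncoordinates}--\eqref{equ:sol4-2}, treating them as a claimed solution of the system~\eqref{equ:system} rather than re-deriving them from scratch. Concretely, I would check (a) that with the coordinates of \eqref{equ:4goncoordinates} and the value $y_3$ of \eqref{equ:sol4}, the quadrilateral $f_0$ is convex, and (b) that after adding the boundary stresses \eqref{equ:sol4-2} to the substitution stresses $\ot_{ij}$, every boundary vertex is in equilibrium, i.e.\ the forces $F$ are canceled.

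For part (a), I would argue convexity from the sign of $y_3$. By the assumption $\ot_{13}\geq\ot_{24}$ and the positivity of all substitution stresses (Lemma~\ref{lem:2}, since $\omega\equiv1$ on interior edges), the denominator $2\ot_{13}-\ot_{24}$ in \eqref{equ:sol4} is strictly positive and exceeds the numerator $\ot_{24}$, so $0<y_3\leq1$. With $\p_1=(0,0)^T$, $\p_2=(1,0)^T$, $\p_3=(2,y_3)^T$, $\p_4=(0,1)^T$, I would confirm that the four vertices, taken in this cyclic order, turn consistently in one direction; the bound $0<y_3\le1$ is exactly what keeps $\p_3$ on the correct side of the lines $\p_2\p_4$ and so forth, placing all four points in convex position. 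This is the easy direction and largely repeats the deduction already sketched in the paragraph preceding the lemma.

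For part (b), I would exploit the Substitution Lemma (Lemma~\ref{lemma:1}), which tells us that the force at each boundary vertex is $\mathbf{F}_i=\sum_{j\in B, j\neq i}\ot_{ij}(\p_i-\p_j)$, so the interior structure is fully encoded by the substitution stresses. The total force at $v_i$ after we switch on the boundary stresses is $\sum_{j}(\ot_{ij}+\omega_{ij})(\p_i-\p_j)$, where the sum is over boundary neighbors for the $\omega$-part and over all $j\in B$ for the $\ot$-part. I would simply substitute the explicit coordinates and the stresses \eqref{equ:sol4-2} into this expression at each of the four vertices and verify that both the $x$- and $y$-components vanish. The key simplification, already noted in the text, is that the substitution stresses between boundary-adjacent vertices are canceled termwise by the corresponding $\omega_{ij}$, so only the diagonal substitution stresses $\ot_{13}$ and $\ot_{24}$ genuinely enter; this is why $y_3$ depends only on $\ot_{13}$ and $\ot_{24}$.

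The main obstacle is purely bookkeeping: ensuring the equilibrium equations close at \emph{all four} vertices simultaneously, not merely at the two vertices whose coordinates were fixed. Because the system \eqref{equ:system} is under-constrained and its equations are dependent (they encode the global force balance $\sum_i\mathbf{F}_i=\mathbf{0}$, which holds automatically), I would be careful to check that satisfying equilibrium at three vertices together with the total-force identity forces equilibrium at the fourth, so that the value \eqref{equ:sol4} is both necessary and sufficient. I expect no genuine difficulty beyond this verification, since the solution was obtained by computer algebra; the proof is therefore a confirmation that the stated closed forms indeed solve \eqref{equ:system} and yield a convex face, which is why the statement can reasonably be discharged with \qed.
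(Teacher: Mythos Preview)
Your proposal is correct and matches the paper's approach exactly: the paper itself offers no separate proof of this lemma (hence the bare \qed), treating the preceding discussion---the computer-algebra solution of~\eqref{equ:system}, the positivity of the $\ot_{ij}$ from Lemma~\ref{lem:2}, the labeling assumption $\ot_{13}\ge\ot_{24}$, and the resulting bound $0<y_3\le1$ giving convexity---as the entire argument. Your plan to verify the equilibrium equations by direct substitution and to invoke the global force balance for the dependent equation is precisely the intended confirmation of the stated closed-form solution.
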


\noindent
\subsubsection*{Case 3: $G$ contains no triangular and no quadrilateral
  face}
The case if the smallest face of $G$ is a  pentagon is more complicated. 
We have $\binom52=10$ substitution stresses $\tilde \omega_{ij}$, but the adjacent ones
do not count (by the same reasons given in the previous case).
So we are left with five ``diagonal'' substitution stresses
$\ot_{13},\ot_{14},\ot_{24},\ot_{25}$, and $\ot_{35}$. 

Like in the previous cases we determine a unique solution of the equation system
by fixing some of the coordinates of the outer face. 
However, we have to make more effort to guarantee the
convexity of $f_0$. We first observe:
\begin{lem}\label{lem:lemma5gon}
We can relabel the boundary points for any stress $(\ot_{ij})_{1\leq i,j
  \leq 5}$ such that
\[ \ot_{35}  \geq  \ot_{24}  \quad\mbox{and}\quad  \ot_{25}  \geq  \ot_{13}.\]
\end{lem}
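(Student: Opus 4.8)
The plan is to exploit the dihedral symmetry of the relabelings of the pentagon $f_0$ together with the observation that the five ``diagonal'' substitution stresses $\ot_{13},\ot_{24},\ot_{35},\ot_{14},\ot_{25}$ form a single orbit under this symmetry group. Each of these diagonals joins a pair of boundary vertices at cyclic distance $2$, so together they are exactly the edges of the pentagram spanned by the five boundary vertices. Relabelings of $f_0$ correspond to elements of the dihedral group $D_5$, and this group acts on the pentagram edges as the dihedral group of that $5$-cycle; in particular it acts \emph{transitively} on the five diagonals. This transitivity is the whole engine of the proof.

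First I would apply a rotation that moves the largest of the five diagonals into the slot $\ot_{25}$. By transitivity this is always possible. After this relabeling $\ot_{25}$ is the maximum diagonal, so the second required inequality $\ot_{25}\ge\ot_{13}$ holds automatically, and in fact $\ot_{25}$ dominates every other diagonal.

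Next I would secure the first inequality using the unique nontrivial relabeling that still keeps $\ot_{25}$ fixed, namely the reflection of the pentagon through vertex $1$, which swaps $2\leftrightarrow5$ and $3\leftrightarrow4$. A direct check shows this reflection fixes $\ot_{25}$, exchanges $\ot_{35}\leftrightarrow\ot_{24}$, and exchanges $\ot_{13}\leftrightarrow\ot_{14}$. Hence, if $\ot_{35}<\ot_{24}$ after the first step, applying this reflection swaps those two values and yields $\ot_{35}\ge\ot_{24}$. Because the reflection leaves $\ot_{25}$ untouched, it remains the global maximum, so $\ot_{25}\ge\ot_{13}$ survives: the new $\ot_{13}$ equals the old $\ot_{14}$, which is at most the maximum. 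Thus both inequalities hold simultaneously.

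The only genuine work is the bookkeeping around the stabilizer: I must verify that the stabilizer of the diagonal $\{2,5\}$ in $D_5$ is generated precisely by this reflection (no nontrivial rotation fixes a vertex pair, and among the reflections only the one through vertex $1$ preserves $\{2,5\}$), and that this reflection acts on the remaining diagonals by exactly the stated transpositions. This guarantees that repairing the first inequality cannot spoil the second. Once these two facts are confirmed, the ``place the maximum at $\ot_{25}$, then reflect if needed'' strategy delivers the lemma immediately, and nothing beyond elementary case-free reasoning is required.
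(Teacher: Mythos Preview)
Your argument is correct and is essentially the paper's proof, just with the roles of the two inequalities swapped: the paper first rotates so that the maximum diagonal sits in the slot $\ot_{35}$ (making $\ot_{35}\ge\ot_{24}$ automatic) and then, if $\ot_{25}<\ot_{13}$, applies the reflection $3\leftrightarrow 5$, $1\leftrightarrow 2$ through vertex~$4$, which fixes $\ot_{35}$ and swaps $\ot_{25}\leftrightarrow\ot_{13}$. Your version places the maximum at $\ot_{25}$ and reflects through vertex~$1$ instead; the group-theoretic framing (orbit, stabilizer) is more explicit than the paper's two-line check but the underlying mechanism is identical.
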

\begin{proof}
Without loss of generality we assume that the largest stress on an
interior edge is $\ot_{35}$. If
$\ot_{25}\geq\ot_{13}$ we are done. Otherwise we relabel the vertices by
exchanging $\mathbf{p}_3 \leftrightarrow \mathbf{p}_5$ and
$\mathbf{p}_1 \leftrightarrow \mathbf{p}_2$.
\end{proof}
For the rest of this section we label the vertices such that Lemma~\ref{lem:lemma5gon} holds.
The way we embed  $f_0$ depends on the substitution stresses $\ot_{ij}$.\\[0.5ex]
\textbf{Case 3A:}\\
We assume that
\begin{align}\label{equ:case1}
\ot_{35}\ot_{14}+\ot_{14}\ot_{25}+\ot_{25}\ot_{24}+\ot_{13}\ot_{35}>\ot_{35}\ot_{25}.
\end{align}
In this case we assign
\[
 \mathbf{p}_1= \begin{pmatrix}0 \\  0\end{pmatrix},
 \mathbf{p}_2= \begin{pmatrix}1 \\  0\end{pmatrix},
 \mathbf{p}_3= \begin{pmatrix}1 \\ 1 \end{pmatrix},
 \mathbf{p}_4= \begin{pmatrix}0 \\ 1 \end{pmatrix},
 \mathbf{p}_5= \begin{pmatrix}x_5\\y_5 \end{pmatrix}.
\]
\begin{figure}[ht]
 \center 
 \begin{tabular}{cp{1cm}c}
   \includegraphics[scale=.7]{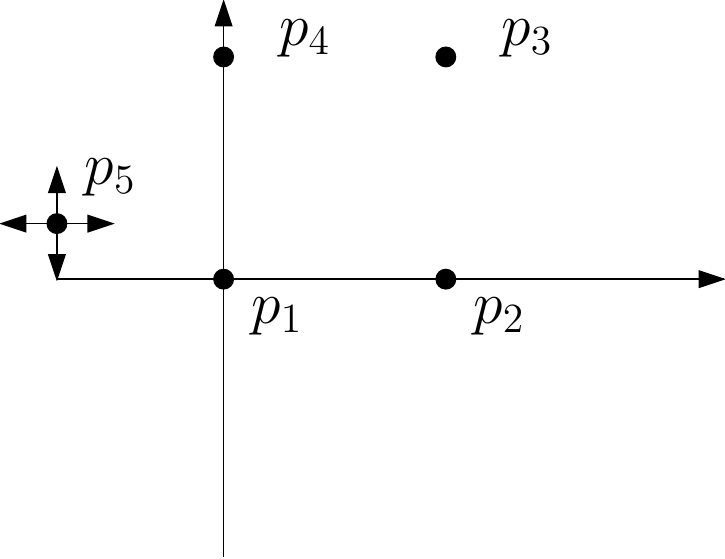} & &
   \includegraphics[scale=.7]{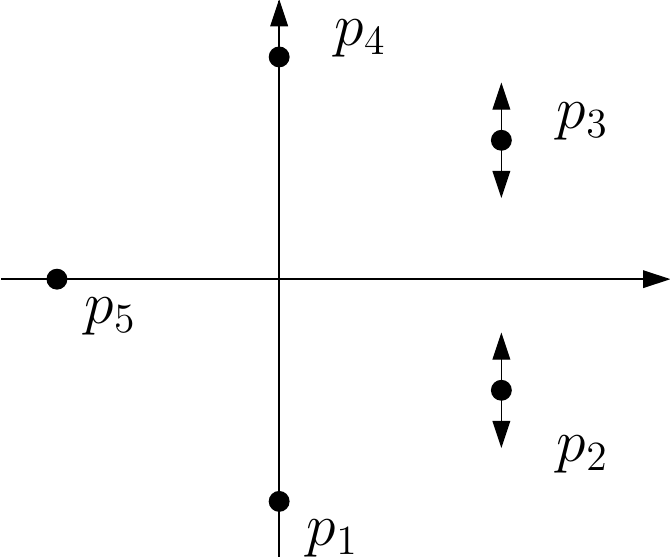} \\
  (a) Case 3A &&  (b) Case 3B 
 \end{tabular}
    \caption{Placement of the boundary vertices.}
    \label{fig:placement}
\end{figure}
Figure~\ref{fig:placement}(a) illustrates the location of the points.
Together with the equations of 
\eqref{equ:system} we obtain as solution for $\mathbf{p}_5$:
\[x_5=
 \frac{(\ot_{13}-\ot_{25}-\ot_{24})(\ot_{35}+\ot_{13}-\ot_{24})}       {\ot_{35}\ot_{14}+\ot_{14}\ot_{25}+\ot_{25}\ot_{24}+\ot_{13}\ot_{35}-\ot_{35}\ot_{25}},\]
 \[y_5=\frac{\ot_{35}+\ot_{13}-\ot_{24}}{\ot_{35}+\ot_{25}} .\]
The boundary stresses $\omega_{ij}$ are complicated expressions
that are not necessary for further computations.
We list them here for completeness only.
\begin{align*}
\omega_{12}  &= 
 \frac{\ot_{13}(\ot_{25}^2+\ot_{24}\ot_{35}+2\ot_{24}\ot_{25}-\ot_{13}\ot_{25})
       \!+\!\ot_{14}(\ot_{25}^2+\ot_{25}\ot_{35}+\ot_{24}\ot_{25}+\ot_{35}\ot_{24})}
       {\ot_{35}\ot_{25}-\ot_{14}\ot_{25}-\ot_{25}\ot_{24}-\ot_{13}\ot_{35}-\ot_{35}\ot_{14}}- \ot_{12},\\         
\omega_{34} &=  
 \frac{ \ot_{14}(\ot_{35}^2+\ot_{35}\ot_{13}+\ot_{25}\ot_{35}+\ot_{13}\ot_{25})
       \!+\!\ot_{24}(\ot_{35}^2+\ot_{13}\ot_{25}+2\ot_{13}\ot_{35}-\ot_{35}\ot_{24})}
       {\ot_{35}\ot_{25}-\ot_{14}\ot_{25}-\ot_{25}\ot_{24}-\ot_{13}\ot_{35}-\ot_{35}\ot_{14}} - \ot_{34},  \\
\omega_{23} & = 
\frac{\ot_{13}\ot_{25}+\ot_{25}\ot_{35}+\ot_{24}\ot_{25}}
         {-\ot_{25}-\ot_{35}}     -\ot_{23}, \\
\omega_{45}  & = 
\frac{\ot_{24}\ot_{25}+\ot_{25}\ot_{14}+\ot_{14}\ot_{35}+\ot_{24}\ot_{35}}
         {\ot_{13}-\ot_{24}-\ot_{25}}     -\ot_{45}, 
\\[1.1ex]
\omega_{15} & = 
\frac{\ot_{13}\ot_{25}+\ot_{35}\ot_{13}+\ot_{14}\ot_{25}+\ot_{14}\ot_{35}}
         {\ot_{24}-\ot_{35}-\ot_{13}}     -\ot_{15}. 
\end{align*}

We have to check that $f_0$ forms a convex polygon. 
Clearly, $y_5>0$, since
the $\ot_{ij}$'s  are greater than zero and $\ot_{35}\geq\ot_{24}$. Moreover 
$y_5<1$, because $\ot_{25}\geq\ot_{13}$. The numerator of
$x_5$ is negative and due to \eqref{equ:case1} the denominator
of $x_5$ is positive. Therefore, $x_5<0$ and $f_0$ forms
a convex polygon.\\[0.5ex]
\textbf{Case 3B:}\\
We assume the opposite of \eqref{equ:case1}, namely 
\begin{align}\label{equ:case2}
 \ot_{35}\ot_{14}+\ot_{14}\ot_{25}+\ot_{25}\ot_{24}+\ot_{13}\ot_{35}\leq\ot_{35}\ot_{25}.
\end{align}
The coordinates for the boundary vertices are chosen as
\[\mathbf{p}_1\!=\!
\begin{pmatrix} 0 \\ -1 \end{pmatrix},\mathbf{p}_2\!=\!\begin{pmatrix} 1 \\ y_2\end{pmatrix},
\mathbf{p}_3\!=\!\begin{pmatrix} 1 \\ y_3 \end{pmatrix} ,\mathbf{p}_4\!=\!\begin{pmatrix} 0 \\
  1\end{pmatrix},\mathbf{p}_5\!=\!\begin{pmatrix} -1 \\ 0\end{pmatrix}.\] 
See Figure~\ref{fig:placement}(b) for an illustration.
This leads to the solution 
\begin{align*}
 y_2& =  -2\cdot \frac{\ot_{24}\ot_{13}+\ot_{24}\ot_{35}+
 \ot_{25}\ot_{13}+2\ot_{25}\ot_{35}-\ot_{13}^2-2\ot_{13}\ot_{35}
 -\ot_{35}\ot_{14}}
 {\ot_{24}\ot_{35}+\ot_{25}\ot_{13}+2\ot_{25}\ot_{35}}, 
 \\
 y_3&=\phantom{-} 2\cdot\frac{\ot_{24}\ot_{13}+\ot_{24}\ot_{35}+
 \ot_{25}\ot_{13}+2\ot_{25}\ot_{35}-\ot_{24}^2-2\ot_{24}\ot_{25}
 -\ot_{14}\ot_{25}}
 {\ot_{24}\ot_{35}+\ot_{25}\ot_{13}+2\ot_{25}\ot_{35}}.
\end{align*}
The boundary stresses   $\omega_{ij}$ are once more  not necessary for further computations and listed for completeness only.
\begin{align*}
\omega_{12} & =  -\ot_{24}-2\ot_{25}-\ot_{12}, \\[1.1ex]
\omega_{23}& =   \frac{-\ot_{25}(\ot_{13}^2+2 \ot_{13}\ot_{35}+2\ot_{24}\ot_{35})
             -\ot_{35}(\ot_{24}^2+\ot_{25}\ot_{14})}
             {2\ot_{35}(\ot_{24}\!+\!\ot_{25}\!-\!\ot_{13}\!-\!\frac{1}{2}\ot_{14})
             \!+\!2\ot_{25}(\ot_{13}\!+\!\ot_{35}\!-\!\ot_{24}\!-\!\frac{1}{2}\ot_{14})
             \!-\!(\ot_{13}\!-\!\ot_{24})^2}-\ot_{23},\\[1.1ex] 
\omega_{34} & =
  -\ot_{14}-2\ot_{15}-\ot_{34},  \\[1.1ex]
\omega_{45} & =
  \ot_{24}-2\ot_{35}-\ot_{13}-\ot_{45}, \\[1.1ex]
\omega_{15} &=
  \ot_{13}-2\ot_{25}-\ot_{24}-\ot_{15}. 
\end{align*}

The outer face is convex if $-2<y_2<y_3<2$. 
The inequalities $-2<y_2$ and $y_3<2$ are equivalent to
\begin{align*}
-\ot_{13}^2-\ot_{35}\ot_{14}+\ot_{13}(\ot_{24}-2\ot_{35})&<0 
 \text{ and} \\
-\ot_{24}^2-\ot_{14}\ot_{25}+\ot_{24}(\ot_{13}-2\ot_{25})&<0.
\end{align*}
Both inequalities hold, because we add only negative summands on the left side.
It remains to check if $y_2-y_3<0$. First we get rid of the
denominator and bring all negative summands on the right side. This
leads to the equivalent inequality
\begin{multline}\label{equ:case2convex}
\ot_{13}^2+\ot_{24}^2+2\ot_{13}\ot_{35}+2\ot_{24}\ot_{25}+ \ot_{25}\ot_{14}+\ot_{35}\ot_{14}<\\
  2\ot_{24}\ot_{35}+2\ot_{25}\ot_{13}+4\ot_{25}\ot_{35}+2\ot_{24}\ot_{13}. 
\end{multline}
We observe that $\ot_{13}^2\leq\ot_{25}\ot_{13}$ and
$\ot_{24}^2\leq\ot_{24}\ot_{35}$. Because of  the assumption~\eqref{equ:case2} we have 
$4\ot_{35}\ot_{25}>2\ot_{13}\ot_{35}+2 \ot_{24}\ot_{25}+\ot_{25}\ot_{14}+\ot_{35}\ot_{14}$.
Therefore, the right side of \eqref{equ:case2convex} is greater than its left
side, which shows that $y_2<y_3$ and $f_0$ forms a convex pentagon.
This completes the case distinction and we conclude with the following lemma.
\begin{lem}
If we place the boundary vertices as
discussed above, then the outer face will be embedded as a convex pentagon and 
the computed boundary stresses cancel the forces in $F$.
\qed
\end{lem}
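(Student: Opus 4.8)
The plan is to verify, for each of the two sub-cases established above, that the coordinate placement produces a genuinely convex pentagon and that the boundary stresses listed in the statement do in fact cancel the forces $F$. The convexity half of the claim has essentially already been carried out in the text preceding the lemma: in Case 3A the signs $y_5>0$, $y_5<1$, and $x_5<0$ were deduced from the positivity of the $\ot_{ij}$ (Lemma~\ref{lem:2}), the normalizations $\ot_{35}\ge\ot_{24}$ and $\ot_{25}\ge\ot_{13}$ (Lemma~\ref{lem:lemma5gon}), and the defining assumption~\eqref{equ:case1}; in Case 3B the chain $-2<y_2<y_3<2$ was reduced to~\eqref{equ:case2convex} and checked using the reverse assumption~\eqref{equ:case2}. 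So my first step is simply to collect these sign computations and observe that in each case the five points, read in cyclic order $\mathbf{p}_1,\dots,\mathbf{p}_5$, turn consistently in one direction, hence bound a convex polygon.

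The force-cancellation half is where the real content lies, and my approach is to reduce it to the solvability of the linear system~\eqref{equ:system}. Recall from the Substitution Lemma (Lemma~\ref{lemma:1}) that the residual boundary forces are exactly $\mathbf{F}_x=\tilde L\mathbf{x}_B$ and $\mathbf{F}_y=\tilde L\mathbf{y}_B$, and that adding the boundary-edge Laplacian $L_0$ contributes the forces from the as-yet-unassigned boundary stresses. Thus cancellation of $F$ at every boundary vertex is precisely the statement that $(L_0+\tilde L)\mathbf{x}_B=\mathbf{0}$ and $(L_0+\tilde L)\mathbf{y}_B=\mathbf{0}$, which is~\eqref{equ:system}. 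So I would first argue that the chosen coordinates together with the stated stresses constitute a solution of~\eqref{equ:system}: since $L_0$ and $\tilde L$ each have zero row sums, the all-ones vector lies in their common kernel, the system is genuinely under-determined, and fixing the coordinates as in~\eqref{equ:4goncoordinates}-style placements pins down a unique solution for the remaining unknowns. Substituting the displayed values of $x_5,y_5$ (Case 3A) or $y_2,y_3$ (Case 3B) and the accompanying boundary stresses $\omega_{ij}$ back into~\eqref{equ:system} and checking that both equations evaluate to $\mathbf{0}$ then certifies that every boundary vertex is in equilibrium.

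The main obstacle is therefore purely algebraic verification: the solutions were obtained by computer algebra, so the honest way to discharge the lemma is to substitute the closed forms for the coordinates and the five $\omega_{ij}$ back into the ten scalar equations of~\eqref{equ:system} and confirm they hold identically as rational functions of $\ot_{13},\ot_{14},\ot_{24},\ot_{25},\ot_{35}$. I would not grind through these by hand in the paper; instead I would note that each equation clears its denominator to a polynomial identity in the five diagonal substitution stresses, and that these identities can be checked mechanically. The genuinely delicate point, rather than the cancellation itself, is ensuring that the denominators appearing in $x_5$, $y_5$, $y_2$, $y_3$ and in the $\omega_{ij}$ are nonzero so that the solution is well-defined: in Case 3A the denominator of $x_5$ is positive exactly by assumption~\eqref{equ:case1}, and in Case 3B the denominator $\ot_{24}\ot_{35}+\ot_{25}\ot_{13}+2\ot_{25}\ot_{35}$ is manifestly positive since all $\ot_{ij}>0$. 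Once non-degeneracy is secured, convexity follows from the sign analysis and equilibrium follows from the substitution, completing the proof.
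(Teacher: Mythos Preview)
Your proposal is correct and mirrors the paper's own treatment: the lemma is stated with a \qed because its proof \emph{is} the preceding discussion, namely the sign analysis establishing convexity in each sub-case together with the fact that the displayed coordinates and boundary stresses were obtained as the (computer-algebra) solution of~\eqref{equ:system}. Your additional remarks on verifying the solution by back-substitution and on the non-vanishing of the denominators make the argument slightly more explicit than the paper, but the approach is the same.
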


We have defined four different ways to embed $G$. The selected embedding
depends on the combinatorial structure $G$. If $G$ contains a
triangular face
we say it is of \textit{type 3}.\index{graph!type} If it contains a quadrilateral but
no triangular face $G$ is of \textit{type 4}. Otherwise the embedding
depends on the substitution stresses induced by the combinatorial structure of $G$. If
\eqref{equ:case1} holds (Case 3A) $G$ is
of \textit{type 5A}, otherwise (Case 3B) we say $G$ is of \textit{type 5B}. 

\subsection{Lifting the Plane Embedding and Scaling to Integrality}\label{sec:scaling}

We continue with lifting the plane embedding of
$G$ to $\mathbb{R}^3$. With help of the observations made in Section~\ref{sec:lifting} the incremental computation of the 3d embedding is straightforward. It suffices to compute for every face $f_{i}$ the corresponding plane $H_{i}$.

Since the embedding of $f_0$ is convex, 
the boundary stresses must necessarily be negative,
since otherwise the boundary vertices could not be in equilibrium with all
interior stresses being positive.
Thus we do not need to explicitly check the sign of the boundary stresses. The sign pattern of 
the stress implies that the lifting of the plane embedding gives a convex
polytope (see Proposition~\ref{obs:signstress}).

As described in Section~\ref{sec:lifting},
we begin the lifting by fixing the plane $H_1$ for some interior face $f_1$ as the $x$-$y$-plane. We set
$\mathbf{a}_1=(0,0)^T$, $d_1=0$,
and compute the remaining planes face by face using equations~\eqref{equ:liftingit}
 and~\eqref{equ:liftingit2}.
 It is not necessary to compute the parameters
of $H_0$ since we can determine the heights of $\p_1,\ldots,\p_k$ by some plane $H_i$ of
an interior face. Hence, the lifting can be computed using only stresses on interior edges.
This simplifies the later analysis because all interior 
stresses are $1$, whereas the boundary stresses are complicated expressions.




It can be observed that the computed lifting has rational coordinates. This is true because the barycentric embedding gives rational coordinates and the lifting process is based on multiplication and addition of the 2d coordinates. Hence, the $z$-coordinates are also rational.
We analyze the common denominator of the coordinates to obtain scaling factors for the integral embedding. We use different scaling factors $S_x$ for 
the $x$-coordinates and $S_y$ for the $y$-coordinates.

As a consequence of Lemma~\ref{lem:scaling1} it is sufficient to 
scale to integer $x$ and $y$-coordinates.
Furthermore we observe:\index{reduced Laplacian matrix}
\begin{lem}\label{lem:cramer}
If the boundary points are integral, the barycentric embedding yields
coordinates that are multiples of $1/ \det \bar L$.
\end{lem}
\begin{proof}
The interior plane coordinates are a result of equation~\eqref{equ:tuttex}.
By Cramer's rule every coordinate can be expressed as 
\[x_i=\det{\bar{L}^{(i)}} / \det{\bar{L}},\]
where $\det{\bar{L}^{(i)}}$ is obtained from $\bar{L}$ by replacing the
$i$-th column of $\bar{L}$ by $L_{IB}\mathbf{x}_B$. 
Since $\det{\bar{L}^{(i)}}$ is integral,  $\det{\bar{L}}$ is the denominator of $x_i$.
The same holds for~$y_i$.
\end{proof}

Our first goal is to scale the plane embedding such that the boundary vertices 
get integer coordinates.  
Let $S^B_x$ be the integral scaling factor that gives integer \textit{boundary} $x$-coordinates
and  $S^B_y$ be the integral scaling factor that gives integer \textit{boundary} $y$-coordinates.
Due to Lemma~\ref{lem:cramer} the scaling factors
$S_x:=S_x^B\det{\bar{L}}$ and $S_y:=S_y^B\det{\bar{L}}$ make all vertices integral.
Since we choose integral scaling factors $S_x^B$ and $S_y^B$ no integer 
coordinate is scaled to a non-integer.

Let us now compute the factors that are necessary to scale to integer 
boundary coordinates.
Clearly the scaling factors depend on the type of $G$. 
If $G$ is of type $3$ then we need not 
scale, since all boundary coordinates are either $0$ or $1$. 
If $G$ is of type $4$ we have to scale the
$y$-coordinates only (see~\ref{equ:sol4}). 
We multiply $y_{3}$ with 
$S_y^B:=(2\ot_{13}-\ot_{24})\det{\bar{L}}$, which gives 
$S_y^B y_{3}= \ot_{24}\det{\bar{L}}$,  which due to the Substitution Lemma is an integer.

If $G$ is of type $5A$ we have to scale such that $x_5$ and $y_5$
 become integral. We pick
\begin{align*}
 S_x^B & = 
 {(\ot_{35}\ot_{14}+\ot_{14}\ot_{25}+\ot_{25}\ot_{24}+\ot_{13}\ot_{35}-\ot_{35}\ot_{25}})
 (\det{\bar{L}})^2,
 \\
 S_y^B & =  (\ot_{35}+\ot_{25})  \det{\bar{L}}. 
 \end{align*}
It can be easily checked that these 
factors as well as  $S^B_x x_5$ and $S^B_y y_5$ are integral.

When $G$ is of type $5B$, 
the only non-integer boundary coordinates are $y_2$ and $y_3$, we need
to scale in $y$-direction only. We choose
\[S^B_y=(\ot_{24}\ot_{35}+\ot_{25}\ot_{13}+2\ot_{25}\ot_{35}) (\det{\bar{L}})^2.\]
Again, due to the Substitution Lemma, $S^B_y$, $S^B_y y_2$, and $S^B_y y_3$ are all integral.

For every type of $G$ there is a pair of scaling factors $S_{x},S_{y}$, such that the
scaled boundary points are integral. 
Table~\ref{tab:scaling} summarizes the discussion and lists the final
scaling factors depending on the type of $G$.
\renewcommand{\arraystretch}{1.2}
\begin{table}[htb]
   \centering
   \begin{tabular}{cl}
\hline \hline
       type of $G$ & scaling factors \\
     \hline\hline
3    & 
$S_x=
S_y=\det\bar{L}$\\\hline
      &$S_x=\det{\bar{L}}$  \\ 
     \raisebox{1.5ex}[1.5ex]{$4$}     &       $S_y=(2\ot_{13}-\ot_{24})(\det{\bar{L}})^2$\\\hline
      & $S_x={(\ot_{35}\ot_{14}+\ot_{14}\ot_{25}+\ot_{25}\ot_{24}+\ot_{13}\ot_{35}-\ot_{35}\ot_{25}})
 (\det{\bar{L}})^3$ \\ 
     \raisebox{1.5ex}[1.5ex]{$5A$}     &
     $S_y=(\ot_{35}+\ot_{25})  (\det{\bar{L}})^2$ \\ \hline
      & $S_x= \det{\bar{L}}$\\ 
     \raisebox{1.5ex}[1.5ex]{$5B$}     & $S_y=(\ot_{24}\ot_{35}+\ot_{25}\ot_{13}+2\ot_{25}\ot_{35}) 
(\det{\bar{L}})^3$\\ \hline
 \end{tabular}
   \caption{The scaling factors $S_x$ and $S_y$ for the
     different types of $G$.}
 \label{tab:scaling}
 \end{table}

\subsection{Analysis of the Grid Size}
\label{sec:analysis}
To bound the size of the coordinates of the integer embedding it is crucial
to obtain a good bound for $\det \bar L$. 
Recall that we assume unit stresses $\omega \equiv 1$
on the interior edges, throughout.
There exists a connection between the
number of spanning trees in $G$ and $\det\bar L$. Let us first define:
\begin{defn}
 Let $\mathcal{B}$ be a subset of vertices of $G$. A subgraph of $G$ is \index{B-forest@$\mathcal{B}$-forest}
 called \emph{spanning $\mathcal{B}$-forest} if 
 \begin{itemize}
 \item it consists of $|\mathcal{B}|$ vertex disjoint trees covering
   all vertices of $G$,
 \item each tree contains a unique vertex from $\mathcal{B}$.
 \end{itemize}
\end{defn}
In the following we use the set of boundary vertices  for  $\mathcal{B}$.
Let $\mathcal{F}_B(G)$ denote the number of spanning $B$-forests of $G$ and 
$\mathcal{T}(G)$ the number of spanning trees of $G$.
A generalization of the Matrix-Tree Theorem\index{Matrix-Tree-Theorem}
\cite{l-gmtt-82} (see also
\cite{r-rcpps-06}) states that the number of spanning $B$-forests of $G$
is $\det\bar L$.
In our case, we can directly bound $\mathcal{F}_B(G)$ by $\mathcal{T}(G)$. 
\begin{lem}\label{lem:foresttree}
Let $G$ be a planar graph with a distinguished face and let $B$ be the
set of vertices of this face. 
The number of spanning $B$-forests of $G$ is bounded
from above by
\[  \mathcal{F}_B(G) <  \mathcal{T}(G).\]
\end{lem}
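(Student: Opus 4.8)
$\mathcal{F}_B(G) < \mathcal{T}(G)$, where $\mathcal{F}_B(G)$ counts spanning $B$-forests (forests with $|B|$ trees, each containing exactly one boundary vertex), and $\mathcal{T}(G)$ counts spanning trees.

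Let me think about how to prove this.The plan is to exhibit an injection $\Phi$ from the set of spanning $B$-forests of $G$ into the set of spanning trees of $G$, and then to show that $\Phi$ misses at least one spanning tree, which yields the strict inequality. Since the boundary vertices $1,\dots,k$ are the vertices of a face $f_0$ of a $3$-connected planar graph, its boundary is a simple cycle, so the edges $(1,2),(2,3),\dots,(k-1,k)$ all belong to $G$. Given a spanning $B$-forest $F$ with components $T_1,\dots,T_k$, where $T_i$ is the unique tree containing boundary vertex $i$, I would define $\Phi(F) := F \cup \{(1,2),(2,3),\dots,(k-1,k)\}$.

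First I would check that $\Phi(F)$ is a spanning tree. A spanning $B$-forest on $n$ vertices with $k$ components has exactly $n-k$ edges; adding the $k-1$ path edges yields $n-1$ edges. Each added edge $(i,i+1)$ joins $T_i$ to $T_{i+1}$, which lie in distinct components, so no added edge closes a cycle, and the process merges all $k$ trees into one. An acyclic, connected spanning subgraph with $n-1$ edges is a spanning tree. Next, injectivity: because each tree of a $B$-forest contains exactly one boundary vertex, no edge of $F$ can join two boundary vertices; in particular none of the path edges lies in $F$. Hence $F$ and the added set are edge-disjoint, and $F$ is recovered from $\Phi(F)$ by deleting the path edges, so $\Phi$ is injective.

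It remains to produce a spanning tree outside the image of $\Phi$, which gives strictness. Every tree in the image contains the edge $(1,2)$, so it suffices to find a spanning tree avoiding $(1,2)$. An edge lies in every spanning tree exactly when it is a bridge; since $G$ is $3$-connected it is bridgeless, so $(1,2)$ is not a bridge and such a spanning tree exists. Therefore $\mathcal{F}_B(G) < \mathcal{T}(G)$. The main point to get right is the verification that $\Phi$ really lands among spanning trees---that the path edges neither create a cycle nor fail to connect the $k$ components---together with the observation that a $B$-forest contains no boundary edge, which is what simultaneously secures well-definedness and injectivity; strictness is then a short consequence of bridgelessness.
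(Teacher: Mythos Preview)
Your proof is correct and follows essentially the same approach as the paper: both construct an injection from spanning $B$-forests to spanning trees by adjoining a fixed path of $k-1$ boundary edges (the paper omits $(1,2)$, you omit $(k,1)$), relying on the fact that a $B$-forest contains no boundary edge. The only cosmetic difference is in the strictness argument: the paper observes that some spanning tree \emph{contains} the omitted edge $(1,2)$ and hence lies outside the image, whereas you observe that some spanning tree \emph{avoids} the included edge $(1,2)$ via bridgelessness; both are immediate.
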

\begin{proof}
Every spanning $B$-forest can be turned into a spanning tree by adding all
boundary edges except $(1,2)$. No two distinct spanning $B$-forests
are associated with the same spanning tree. Therefore, the number
of spanning trees exceeds the number of spanning $B$-forests.
Since there is a spanning tree that contains the edge $(1,2)$ the inequality is strict.
\end{proof}

It is easy to give an exponential upper bound for $\mathcal{T}(G)$:
\begin{prop}[Rib\'o~Mor \cite{r-rcpps-06}]
\index{spanning trees!in a planar graph}
\begin{enumerate}
\item The number of spanning trees in a graph is bounded by the product of all vertex degrees:
\[\mathcal{T}(G) <  \prod_i \mathrm{deg}(v_i).\]
\item For a planar graph, we have $\mathcal{T}(G) < \prod_i \mathrm{deg}(v_i)< 6^n$.
\end{enumerate}
\end{prop}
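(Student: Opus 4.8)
The plan is to prove the two parts independently, since part~1 is a purely combinatorial encoding argument while part~2 is a short application of Euler's formula together with the AM--GM inequality.

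For part~1 I would use an injection argument. Fix a root, say $v_n$, and orient every spanning tree $T$ towards this root, so that each of the remaining vertices $v_1,\ldots,v_{n-1}$ acquires a unique parent $p(v_i)$, which is necessarily one of its neighbors in $G$. The map sending $T$ to the tuple $(p(v_1),\ldots,p(v_{n-1}))$ is injective, because a rooted tree is completely determined by its parent pointers. Since the number of admissible parents for $v_i$ is exactly $\deg(v_i)$, the codomain of this map has size $\prod_{i=1}^{n-1}\deg(v_i)$, giving $\mathcal{T}(G)\le\prod_{i=1}^{n-1}\deg(v_i)\le\prod_{i=1}^{n}\deg(v_i)$. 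To upgrade this to the strict inequality I would note that the parent-pointer map is never surjective: whenever $G$ contains a cycle there is a tuple that fails to encode a spanning tree (for example, two adjacent non-root vertices each pointing at the other), and in the degenerate case that $G$ is itself a tree the bound $\mathcal{T}(G)=1<\prod_i\deg(v_i)$ holds directly.

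For part~2 I would combine the planar edge bound with the inequality of arithmetic and geometric means. For a simple planar graph on $n\ge 3$ vertices with $m$ edges, Euler's formula gives $m\le 3n-6$, hence $\sum_i\deg(v_i)=2m\le 6n-12<6n$. In the relevant regime (the graph is $3$-connected, so every degree is at least $3$ and in particular positive) the AM--GM inequality yields
\[
\Big(\prod_i\deg(v_i)\Big)^{1/n}\;\le\;\frac1n\sum_i\deg(v_i)\;<\;6,
\]
so that $\prod_i\deg(v_i)<6^n$. Chaining this with part~1 gives $\mathcal{T}(G)<\prod_i\deg(v_i)<6^n$.

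I do not anticipate any genuine obstacle. The only points that need a little care are the strictness in part~1 (arguing that the encoding misses at least one tuple) and the positivity assumption needed for AM--GM in part~2 (if some degree were $0$ the product would be $0$ and the bound would be trivial); both are handled by routine observations rather than by the core argument.
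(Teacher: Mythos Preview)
Your proof is correct and follows essentially the same route as the paper. For part~1 the paper also encodes a spanning tree by choosing, for every vertex except a fixed $v_n$, one incident (outgoing) edge, yielding the bound $\prod_{i=1}^{n-1}\deg(v_i)$; it additionally remarks that the same bound follows from Hadamard's inequality applied to the principal minor of the Laplacian. For part~2 the paper uses exactly your combination of Euler's formula and AM--GM.
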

\begin{proof}
1. 
Consider all
 directed graphs that are obtained by choosing
an outgoing edge in $G$ out of every vertex except $v_n$. The number
of these directed graphs is given by $\prod_{i=1}^{n-1} \mathrm{deg}(v_i)$.
By ignoring the edge orientations, one obtains all spanning trees
(and many graphs that are not spanning trees). 
Alternatively, the bound can be proved by applying a variant of Hadamard's
inequality for positive semidefinite matrices 
\cite[page 176]{z-mt-99} to the (positive semidefinite) matrix $L'$
that is obtained by removing from the Laplacian $L$ the row and
column corresponding to the 
vertex $v_n$:
\begin{align*}
 \mathcal{T}(G)&
=\det\bar L
\le 
\prod_{i=1}^{n-1} l_{ii}
=\prod_{i=1}^{n-1} \deg(v_i)
\end{align*}

2. This follows from the arithmetic-geometric-mean inequality and the
fact that 
$\sum_i \mathrm{deg}(v_i)< 6n$, which is a consequence of
Euler's formula.
\end{proof}
Sharper bounds for $\mathcal{T}(G)$ have been given by
{Rib\'o~Mor} et al.~\cite{rry-abnst-09}, see also
\cite{r-rcpps-06,r-nstpg-05}.
These bounds take into account whether $G$ contains triangular or
quadrilateral faces:
\begin{align*}
\mbox{if $G$ is of type 3:} \quad  \mathcal{F}_B(G) <  \mathcal{T}(G) & \leq   5.\bar{3}^n, \\
\mbox{if $G$ is of type 4:} \quad  \mathcal{F}_B(G) <  \mathcal{T}(G) & \leq  3.529988^n,   \\
\mbox{if $G$ is of type 5A/5B:} \quad  \mathcal{F}_B(G) <  \mathcal{T}(G) & \leq  2.847263^n.
\end{align*}

Since we know upper bounds for the $\ot$ values (by Lemma~\ref{lem:2}) and $\det \bar L$ (by 
the previous discussion) we can bound the size of the integer coordinates of the embedding
of $G$. We start with bounding the $x$ and $y$-coordinates. Let $\Delta x$ 
denote an upper bound for 
the difference between the largest and the smallest $x$-coordinate. $\Delta y$ is defined
in the same way for the $y$-coordinates.

Again we have to discuss the $4$ cases separately. If $G$ is of type $3$ then clearly
$\Delta x=\Delta y = \det \bar L$. If $G$ is of type $4$ the largest $x$-coordinate is $2 S_x$ 
and the smallest zero. Thus we have $\Delta x=2 \det \bar L$. The largest $y$-coordinate is
obtained at $y_4=1$ (remember $y_{3}\leq 1$), 
therefore $\Delta y=S_y=(2\ot_{13}-\ot_{24})(\det\bar L)^2$. Let us now
assume $G$ is of type $5A$. 
The value of $\Delta x$ is given by $x_2-x_5$. Evaluating this expression leads to
\[\Delta x = (\ot_{25}(\ot_{13}+\ot_{14})+\ot_{35}(\ot_{14}+\ot_{25})-(\ot_{13}-\ot_{24})^2)(\det\bar L)^3.\]
Since the smallest $y$-coordinate is zero we have $\Delta y=y_3$, which equals 
$(\ot_{35}+\ot_{25})(\det\bar L)^2$. It remains to discuss the case when $G$ is of type $5B$. 
\textit{Before} the scaling the coordinates fulfill $-1\leq x \leq 1$ and $-2<y<2$.
Combining these inequalities with the scaling factors yields $\Delta x =2\det \bar L$ and 
$\Delta y= 4(\ot_{24}\ot_{35}+\ot_{25}\ot_{13}+2\ot_{25}\ot_{35}) 
(\det{\bar{L}})^3$. We sum up the results for $\Delta x$ and $\Delta y$ in Table~\ref{tab:dx} and 
Table~\ref{tab:dy}. With help of Lemma~\ref{lem:2} we can eliminate the $\ot$ values that appear in the bounds of $\Delta x$ and $\Delta y$. The resulting upper bounds, which we use in the further analysis, 
are listed in Table~\ref{tab:upperbounds}. 
\renewcommand{\arraystretch}{1.2}
\begin{table}
   \centering
   \begin{tabular}{cl}
       \hline \hline
       type of $G$ & $\Delta x$  \\
     \hline\hline
     $3$ & $\det \bar L$  \\
     $4$ & $2\det \bar L$  \\ 
     $5A$ & $(\ot_{25}(\ot_{13}+\ot_{14})+\ot_{35}(\ot_{14}+\ot_{25})-(\ot_{13}-\ot_{24})^2)(\det\bar L)^3$\\
    $5B$  & $2\det{\bar{L}}$ \\
 \hline
   \end{tabular}
   \caption{The values $\Delta x$ depending on the type of $G$.}
 \label{tab:dx}
 \end{table}
 
\renewcommand{\arraystretch}{1.2}
\begin{table}
   \centering
   \begin{tabular}{cl}
       \hline \hline
       type of $G$ & $\Delta y$  \\
     \hline\hline
     $3$ &  $\det \bar L$ \\
     $4$ &  $(2\ot_{13}-\ot_{24})(\det\bar L)^2$ \\ 
    $5A$ & $(\ot_{35}+\ot_{25})(\det\bar L)^2$\\
    $5B$ & $4(\ot_{24}\ot_{35}+\ot_{25}\ot_{13}+2\ot_{25}\ot_{35})(\det{\bar{L}})^3$ \\
 \hline
   \end{tabular}
   \caption{The values $\Delta y$ depending on the type of $G$.}
 \label{tab:dy}
 \end{table}
 
\renewcommand{\arraystretch}{1.2}
\begin{table}
   \centering
   \begin{tabular}{cll}
       \hline \hline
       type of $G$ & upper bound for $\Delta x$ & upper bound for $\Delta y$  \\
     \hline \hline
     $3$ &  $\det\bar L$ & $\det \bar L$ \\
     $4$ &  $2\det\bar L$ & $2n(\det\bar L)^2$ \\ 
    $5A$ & $4n^2 (\det\bar L)^3$ & $2n (\det\bar L)^2$\\
    $5B$ & $2 \det\bar L$ & $16 n^2(\det{\bar{L}})^3$ \\
 \hline
   \end{tabular}
   \caption{Upper bounds for  $\Delta{x}$ and $\Delta y$ depending on the type of $G$.}
 \label{tab:upperbounds}
 \end{table}

We finish the analysis of the necessary grid size by calculating the 
size of the $z$-coordinates. 
\begin{lem}\label{lem:zsize}
Let $G(\p)$ be an integral 2d embedding of a graph with $n$ vertices
with equilibrium stress $\omega$ and let the
stress on all interior edges be $1$. The difference between two $x$-coordinates is at most
$\Delta x$ and the difference between two $y$-coordinates is at most $\Delta y$. Then we have an integral lifting with
\[0\leq z_i < 2n\Delta x \Delta y \]
for all $z$-coordinates $z_i$.
\end{lem}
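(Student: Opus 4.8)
The plan is to analyze the concrete lifting produced by the algorithm and to bound each $z_i$ by a telescoping sum of triangle areas. First I would fix the lifting exactly as in Section~\ref{sec:scaling}: pick an interior face $f_1$, set its plane to $z=0$ (so $\mathbf{a}_1=(0,0)^T$, $d_1=0$), and compute the remaining interior faces one after another with \eqref{equ:liftingit} and \eqref{equ:liftingit2}, using only interior edges, on all of which $\omega_{ij}=1$. Integrality of the resulting $z_i$ is then immediate from Lemma~\ref{lem:scaling1}. For the lower bound $z_i\ge 0$ I would invoke Proposition~\ref{obs:signstress}: the sign pattern of the stresses makes the lift a convex polytope, and since $f_1$ lies in the plane $z=0$ the whole polytope lies in one closed half-space bounded by $\{z=0\}$; choosing $f_1$ as the bottom face gives $z_i\ge 0$. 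The real work is the upper bound.

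The key step is a telescoping identity for the height of a fixed vertex $v_m$. For two adjacent interior faces $f_r,f_l$ meeting along the edge $(i,j)$, their supporting planes $H_r,H_l$ agree over that edge --- this is precisely the requirement from which \eqref{equ:liftingit2} was derived --- and, combining this with \eqref{equ:liftingit} at $\omega_{ij}=1$, their difference is the linear function $H_l(\p)-H_r(\p)=\langle \p-\p_i,(\p_i-\p_j)^\bot\rangle$, which vanishes on the line through $\p_i$ and $\p_j$. Now I would take the path $f_1=g_0,g_1,\dots,g_t=f$ in the dual spanning tree built by the lifting order, running from $f_1$ to some face $f$ incident to $v_m$, and let $(i_s,j_s)$ be the edge shared by $g_{s-1}$ and $g_s$ (oriented so that $g_s$ lies on its left). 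Since the planes $H_{g_s}$ form a globally consistent lift, summing the consecutive differences telescopes, and using $H_{g_0}\equiv 0$ and $H_f(\p_m)=z_m$ yields
\[
z_m=\sum_{s=1}^{t}\big\langle \p_m-\p_{i_s},\,(\p_{i_s}-\p_{j_s})^\bot\big\rangle .
\]

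It remains to bound this sum. Each summand equals $\pm 2$ times the signed area of the triangle $\p_m\p_{i_s}\p_{j_s}$. The three vertices of this triangle all lie in a common axis-parallel box of width $\Delta x$ and height $\Delta y$, so I would bound each summand by $\Delta x\,\Delta y$ using the fact that a triangle contained in such a box has area at most $\tfrac12\Delta x\,\Delta y$; this is seen by noting that, as a function of each vertex separately, the signed area is linear and hence extremal at a corner of the box, while three corners of a rectangle span exactly half its area. The number $t$ of summands is the distance from $f_1$ to $f$ in the dual tree, hence at most one less than the number of interior faces, which is $F-1\le 2n-5$ for a simple planar graph; hence $t\le F-2<2n$. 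Multiplying the two bounds gives $z_m\le t\,\Delta x\,\Delta y<2n\,\Delta x\,\Delta y$, as claimed.

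The main obstacle I anticipate is getting the constant right: the crude estimate $|\langle \p_m-\p_{i_s},(\p_{i_s}-\p_{j_s})^\bot\rangle|\le \Delta x\,\Delta y+\Delta y\,\Delta x$ would only give $4n\,\Delta x\,\Delta y$, so the sharper ``largest triangle in a box'' bound is exactly what saves the factor of two. A secondary point to justify carefully is that the interior faces are connected through interior edges, so that the dual path actually exists; this is supplied by the lifting order itself, in which every interior face is lifted adjacent to an already lifted one, and it is the dual analogue of Lemma~\ref{lem:technical}.
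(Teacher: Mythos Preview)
Your proof is correct and follows the same strategy as the paper: telescope the height along a dual path of interior faces and bound each term as twice the signed area of a triangle inscribed in a $\Delta x\times\Delta y$ box, then count at most $<2n$ terms. The only difference is in the setup. The paper picks $f_1$ adjacent to $f_0$, translates so that the boundary vertex $\p_1$ farthest from the line $f_1\cap f_0$ sits at the origin, argues that this vertex attains the minimum height, and then bounds only $|z_1|$ via the summands $\langle\p_i,\p_j^\bot\rangle$ (triangles with one vertex at the origin~$=\p_1$); afterwards it translates in $z$. You skip both the special choice of $f_1$ and the planar translation by writing the telescoping difference directly as $\langle \p_m-\p_{i_s},(\p_{i_s}-\p_{j_s})^\bot\rangle$ and bounding every $z_m$ individually. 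Your variant is a little cleaner (no auxiliary translation, no need to identify the extremal vertex), while the paper's version makes explicit why a single vertex suffices; the constants and the ``largest triangle in a rectangle'' step are identical.
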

\begin{proof}
Due to Lemma~\ref{lem:scaling1} we know that there exists an integral 
lifting for the setting described in the lemma.  We place an interior 
face $f_1$ in the $xy$-plane and compute the lifting by using the stresses on
the interior edges. 
Notice that all $z$-coordinates
are non-positive in this lifting. Thus it suffices to compute the smallest
$z$-coordinate. The claimed lifting is then obtained by translating the polytope such that the smallest
$z$-coordinate becomes 0.

We choose as face $f_1$ a face that shares an edge with
the outer face $f_0$. Furthermore we assume that the boundary point
farthest away from the line that contains $f_1\cap f_0$ is located in
the origin (let this point be $\p_1$). 
This is no restriction since a translation of the 
embedding does not interfere with the lifting. The lifted 
polytope lies below the $xy$-plane $H_{1}$ and above  $H_0$.
We notice that the smallest $z$-coordinate of $H_{0}$ (and hence the smallest $z$-coordinate of the embedding) is realized at $\p_1$.

Let $f_k$ be an interior face that contains $\p_1$. The
$z$-coordinate of $\p_1$ is given as 
\[z_1=\langle \mathbf{a_k},\p_1 \rangle + d_k = d_k.\]

The variable $d_k$ can be computed with help of equation
$\eqref{equ:liftingit2}$. Let $\mathcal{C}$ be a set of interior edges 
that are crossed by ``walking'' from $f_1$ to $f_k$. Due to Euler's formula $G$ 
has at most $2n-4$ faces. No face is entered twice and thus every face 
contributes at most one edge (the edge where the ``walk'' leaves the face) to the set $\mathcal{C}$.
This implies that $\mathcal{C}$ includes at most $2n-3$ edges. We ignore the orientation of the edges
at this place since it does not matter for bounding $d_k$. We 
deduce
\begin{align*}
-d_k & \leq  \sum_{(i,j)\in\mathcal{C}}|\langle \p_i,\p_j^\bot
\rangle|
 <  2n  \max\{\,\lvert\langle \p_i,\p_j^\bot\rangle\rvert \colon 1\leq i,j \leq n\}.
\end{align*}
For two points $\p_i,\p_j$ we have 
\[\langle
\p_i,\p_j^\bot\rangle = x_iy_j-x_jy_i.\]
Thus, $\langle
\p_i,\p_j^\bot\rangle$ equals
two times the negative area
of the triangle spanned by $\p_i$, $\p_j$, and the origin (which coincides with $\p_1$). 
This triangle is contained
inside the embedded outer face $f_0$ and also inside a rectangle with edge lengths
$\Delta x$ and $\Delta y$. A rectangle has at least twice the area of
an inscribed triangle. To see this, observe that an inscribed triangle
with the largest area must have one of the rectangle edges as base and
the other as height. Thus  $|\langle
\p_i,\p_j^\bot\rangle|\leq\Delta x\Delta y$ and 
the smallest $z$-coordinate is larger than $-2n\Delta x \Delta
y$.
\end{proof}

By applying Lemma~\ref{lem:zsize}, we compute the bounds for
the $z$-coordina\-tes, using the values of $\Delta x$ and $\Delta y$
listed in
Table~\ref{tab:upperbounds}. We conclude with the main
theorems.
\begin{thm}
Every $3$-polytope with $n$ vertices whose graph contains at least a triangle can
be realized on an integer grid with 
  \[\setlength{\arraycolsep}{0pt}
\begin{array}{rcl}
    0  \leq{}& x_i,y_i&{} <   5.\bar{3}^n, \\
    0 \leq{}& z_i&{} <   2 n \cdot 28.\bar{4}^n. \\
  \end{array}\]
\end{thm}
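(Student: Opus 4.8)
The plan is to specialize the general machinery of Sections~\ref{sec:embedding} and~\ref{sec:analysis} to a graph of \textit{type 3} and then substitute the sharp spanning-tree bound into Lemma~\ref{lem:zsize}. First I would invoke the triangular embedding: since $G$ contains a triangle, I choose that triangle as the outer face $f_0$, place its vertices at $(0,0)^T,(1,0)^T,(0,1)^T$ as in~\eqref{triangle}, and compute the barycentric embedding. By Proposition~\ref{obs:signstress} and Lemma~\ref{lem:scaling1} this is liftable to a convex polytope with integer coordinates after scaling, and by the type-3 entries of Tables~\ref{tab:scaling} and~\ref{tab:upperbounds} the scaling factors are $S_x=S_y=\det\bar L$, so that the spreads are $\Delta x=\Delta y=\det\bar L$ in both directions.

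The second ingredient is the bound on $\det\bar L$. By the generalized Matrix-Tree Theorem we have $\det\bar L=\mathcal{F}_B(G)$, and Lemma~\ref{lem:foresttree} gives $\mathcal{F}_B(G)<\mathcal{T}(G)$. For a triangle-containing (type-3) graph the sharpened spanning-tree bound quoted after Lemma~\ref{lem:foresttree} yields $\mathcal{T}(G)\le 5.\bar{3}^{\,n}$, hence $\det\bar L<5.\bar{3}^{\,n}$. Since after scaling the boundary vertices sit at $(0,0)^T,(\det\bar L,0)^T,(0,\det\bar L)^T$ and every interior vertex lies in the convex hull of these three points (the triangle), the smallest $x$- and $y$-coordinate is $0$; combining this with $\Delta x=\Delta y=\det\bar L$ gives $0\le x_i,y_i\le\det\bar L<5.\bar{3}^{\,n}$, which is the first stated bound.

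For the $z$-coordinates I would apply Lemma~\ref{lem:zsize} directly. With $\Delta x=\Delta y=\det\bar L$ it produces an integral lifting satisfying $0\le z_i<2n\,\Delta x\,\Delta y=2n(\det\bar L)^2$. Substituting $\det\bar L<5.\bar{3}^{\,n}$ gives $z_i<2n\cdot(5.\bar{3}^{\,n})^2=2n\cdot 28.\bar{4}^{\,n}$, where the only arithmetic point to verify is the identity $(16/3)^2=256/9=28.\bar{4}$. This yields both inequalities of the theorem.

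I do not anticipate a genuine obstacle in this final assembly, because every hard step has already been carried out: the existence of a convex liftable plane embedding, the sign pattern forcing convexity, the integrality of the lift, and the forest-counting bound. The only care needed is to track that the minimum coordinate is $0$ in each direction (so that the spread bound $\Delta x$ translates into an absolute bound on $x_i$ rather than merely a bound on differences), and to perform the elementary squaring of $5.\bar{3}$ that turns the plane bound into the height bound.
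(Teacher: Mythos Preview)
Your proposal is correct and follows exactly the paper's route: specialize the type-3 scaling from Table~\ref{tab:scaling} to get $\Delta x=\Delta y=\det\bar L$, bound $\det\bar L$ via $\mathcal{F}_B(G)<\mathcal{T}(G)\le 5.\bar 3^{\,n}$, and plug into Lemma~\ref{lem:zsize}. Your extra remark that the minimum coordinate is $0$ (because the scaled barycentric embedding lies in the triangle with vertices $(0,0)$, $(\det\bar L,0)$, $(0,\det\bar L)$) is a detail the paper leaves implicit, and your arithmetic $(16/3)^2=256/9=28.\bar 4$ is exactly what is needed.
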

\begin{thm}\label{thm:4gon}
Every $3$-polytope with $n$ vertices 
whose graph contains at least one quadrilateral face can
be realized on an integer grid with 
  \begin{align*}
   0  \leq  x_i & <  2 \cdot 3{.}530^n, \\
   0  \leq  y_i & <  2 n \cdot 12{.}46	1^n, \\
   0  \leq  z_i & <  8 n^2 \cdot 43{.}987^n. 
  \end{align*}
\end{thm}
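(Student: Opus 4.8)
The plan is to assemble the theorem from the construction and estimates already established, since the geometric substance lives in the earlier lemmas and the theorem itself is essentially bookkeeping. We may assume $G$ has no triangular face: otherwise $G$ is of \emph{type 3} and the previous theorem already supplies an integer realization. Hence $G$ is of \emph{type 4}, and I would embed it by the Case~2 construction, placing the boundary quadrilateral as in \eqref{equ:4goncoordinates}--\eqref{equ:sol4}, locating the interior vertices by the barycentric system \eqref{equ:tuttex}, and canceling the residual boundary forces with the stresses \eqref{equ:sol4-2}. Since the outer face is convex and all interior stresses equal $1$ while the boundary stresses are negative, Proposition~\ref{obs:signstress} guarantees that the induced lifting is a convex $3$-polytope. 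Scaling by the type-$4$ factors $S_x=\det\bar L$ and $S_y=(2\ot_{13}-\ot_{24})(\det\bar L)^2$ of Table~\ref{tab:scaling} makes the $x$- and $y$-coordinates integral, and then Lemma~\ref{lem:scaling1} forces the $z$-coordinates to be integral as well.

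The only quantity on which all three bounds depend is $\det\bar L$, so the first real step is to bound it. I would identify $\det\bar L$ with the number $\mathcal{F}_B(G)$ of spanning $B$-forests via the generalized Matrix-Tree Theorem, apply Lemma~\ref{lem:foresttree} to obtain $\mathcal{F}_B(G)<\mathcal{T}(G)$, and finally invoke the spanning-tree estimate for graphs with a quadrilateral face, $\mathcal{T}(G)\le 3.529988^n$. Chaining these gives $\det\bar L<3.530^n$.

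With this bound the rest is arithmetic. From the type-$4$ row of Table~\ref{tab:upperbounds}---whose entries already use Lemma~\ref{lem:2} to replace the substitution stresses $\ot_{ij}$ by $n-k<n$---we have $\Delta x\le 2\det\bar L$ and $\Delta y\le 2n(\det\bar L)^2$. Substituting $\det\bar L<3.530^n$ yields $0\le x_i<2\cdot 3.530^n$ and $0\le y_i<2n\cdot 3.530^{2n}<2n\cdot 12.461^n$, using $3.530^2=12.4609<12.461$. For the heights I would apply Lemma~\ref{lem:zsize}, giving $0\le z_i<2n\,\Delta x\,\Delta y\le 8n^2(\det\bar L)^3<8n^2\cdot 3.530^{3n}<8n^2\cdot 43.987^n$, where the last step uses $3.530^3=43.98698\ldots<43.987$.

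Nothing here is difficult once the earlier results are in place; the main points requiring care are purely numerical, namely verifying that the clean constants absorb the rounding ($3.530^2<12.461$ and $3.530^3<43.987$) so that the displayed estimates are genuine upper bounds. The real work has already been done upstream: the Substitution Lemma, the verification in Case~2 that the type-$4$ outer face is convex and that the boundary stresses cancel $F$, and---most importantly for the size of the exponential base---the sharp spanning-tree bound $\mathcal{T}(G)\le 3.529988^n$ for quadrilateral-faced graphs. Given those inputs, the theorem follows by the assembly above.
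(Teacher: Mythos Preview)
Your proposal is correct and follows essentially the same route as the paper: the paper also derives Theorem~\ref{thm:4gon} by reading off the type-$4$ entries of Tables~\ref{tab:scaling} and~\ref{tab:upperbounds}, inserting the spanning-tree bound $\mathcal{T}(G)\le 3.529988^n$, and applying Lemma~\ref{lem:zsize} for the $z$-coordinates. One small caveat: your reduction ``if $G$ has a triangle, use the previous theorem'' does not actually deliver the stated $x$-bound, since $5.\bar{3}^n>2\cdot 3.530^n$ for $n\ge 2$; the paper glosses over this too, and the cited spanning-tree bound $3.529988^n$ is stated only for type-$4$ graphs (quadrilateral but no triangle), so the theorem is tacitly meant for that class.
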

For the most general theorem we have to combine the analysis for the
cases $5A$ and $5B$. We rotate the embedding if $G$ is of type $5B$ by
exchanging its $x$ and $y$-coordinates to obtain a better bound. The
largest $z$-coordinate is given by
 $\max\{16 n^4 \cdot 187{.}128^n,64n^3 \cdot 65{.}722^n\}
= 16 n^4 \cdot 187.128^n$,
since $n> 4$ if $G$ contains a
pentagon.
Thus the largest bound on the $z$-coordinate arises from case~$5A$.
\begin{thm}\label{thm:mainresult}
Every 3-polytope with $n$ vertices can be realized on an integer grid with
\begin{align*}
   0  \leq  x_i & <   16 n^2 \cdot 23{.}083^n, \\
   0  \leq  y_i & <  2n  \cdot 8.107^n, \\
   0  \leq  z_i & <  16 n^4 \cdot 187.128^n. 
 \end{align*}
\end{thm}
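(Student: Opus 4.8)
The plan is to reduce the whole statement to the bound on $\det\bar L$ together with the entries of Table~\ref{tab:upperbounds}, and then to assemble the three coordinate bounds by applying Lemma~\ref{lem:zsize} across the relevant cases. First I would invoke Euler's formula to pick $f_0$ as a face with $k\le 5$ vertices. If $k=3$ or $k=4$ the graph is of type $3$ or type $4$, and the coordinate bounds of the two preceding theorems are numerically smaller than the ones claimed here; hence it suffices to prove the stated bounds for the pentagonal types $5A$ and $5B$, since these simultaneously dominate the triangular and quadrilateral cases. For a graph whose smallest face is a pentagon, the relevant spanning-tree count satisfies $\det\bar L=\mathcal{F}_B(G)<\mathcal{T}(G)\le 2.847263^n$, combining Lemma~\ref{lem:foresttree} with the sharp bound for triangle- and quadrilateral-free planar graphs.

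Next I would substitute this bound into the $\Delta x$ and $\Delta y$ expressions recorded in Table~\ref{tab:upperbounds}. For type $5A$ this gives $\Delta x<4n^2(\det\bar L)^3$ and $\Delta y<2n(\det\bar L)^2$, while for type $5B$ it gives $\Delta x<2\det\bar L$ and $\Delta y<16n^2(\det\bar L)^3$. To merge the two into a single pair of bounds, I would use the rotation remark: for type $5B$ I exchange the roles of the $x$- and $y$-coordinates so that the large cubic factor is carried by $x$ and the linear factor by $y$. Taking the coordinatewise maximum over type $5A$ and rotated type $5B$ then yields $\Delta x<16n^2(\det\bar L)^3$ (attained by rotated $5B$) and $\Delta y<2n(\det\bar L)^2$ (attained by $5A$). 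Inserting $\det\bar L\le 2.847263^n$ and evaluating $2.847263^3\approx 23.083$ and $2.847263^2\approx 8.107$ produces the claimed $x_i<16n^2\cdot 23.083^n$ and $y_i<2n\cdot 8.107^n$.

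For the $z$-coordinate I would apply Lemma~\ref{lem:zsize}, which furnishes an integral lifting with $0\le z_i<2n\,\Delta x\,\Delta y$. Since the product $\Delta x\,\Delta y$ is invariant under the rotation, I compute it in each pentagonal case separately: type $5A$ gives $2n\cdot 4n^2(\det\bar L)^3\cdot 2n(\det\bar L)^2=16n^4(\det\bar L)^5$, and type $5B$ gives $2n\cdot 2\det\bar L\cdot 16n^2(\det\bar L)^3=64n^3(\det\bar L)^4$. With $\det\bar L\le 2.847263^n$ these evaluate to $16n^4\cdot 187.128^n$ and $64n^3\cdot 65.722^n$, using $2.847263^5\approx 187.128$ and $2.847263^4\approx 65.722$. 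Since a pentagonal smallest face forces $n>4$, the first expression dominates, so $z_i<16n^4\cdot 187.128^n$.

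I do not anticipate a genuine obstacle: the argument is an assembly of Lemma~\ref{lem:zsize}, Lemma~\ref{lem:foresttree}, and Table~\ref{tab:upperbounds}, followed by numerical evaluation. The only points that require care are choosing the rotation for type $5B$ so that the two pentagonal cases merge into one coordinatewise-optimal bound, and checking the easy inequality $16n^4\cdot187.128^n>64n^3\cdot65.722^n$ for $n>4$, so that the $z$-bound is governed by case $5A$.
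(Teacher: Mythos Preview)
Your argument is essentially the paper's own proof: reduce to the two pentagonal cases, rotate type~$5B$ so that the cubic factor sits on the $x$-axis, apply Lemma~\ref{lem:zsize}, and check that $16n^4\cdot187.128^n$ dominates $64n^3\cdot65.722^n$ for $n>4$. One small inaccuracy: your claim that the bounds of the two preceding theorems are ``numerically smaller than the ones claimed here'' fails for the $y$-coordinate of type~$4$, since Theorem~\ref{thm:4gon} gives $y_i<2n\cdot12.461^n$, which exceeds the $2n\cdot8.107^n$ asserted here. The fix is immediate---swap $x$ and $y$ for type~$4$ just as you already do for type~$5B$, after which the type-$4$ bounds do fit---and the paper itself passes over this point silently, so your proof is no looser than the original.
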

 We can improve the constant factor of the $z$-coordinate
by a more careful analysis. This can be achieved by placing the face
$f_0$ in the $xy$-plane and then compute the lifting using the
interior edges but also one boundary edge. As mentioned before the
structure of the stresses on the boundary edges is more
complicated. Since the improvement would only be a constant factor we decided
to present the easier analysis with help of Lemma~\ref{lem:zsize}. The
more complicated analysis can be found in~\cite{rrs-epsg-07}.

We see that exponentially large coordinates 
suffice to embed $G$ as $3$-polytope. The 
exponential growth of the size of the coordinates 
is determined by  $(\det \bar L)^5$.
\begin{cor}\label{cor:3d}
Every $3$-polytope  with $n$ vertices 
can be realized with integer coordinates of  size $O(2^{7.55n})$.
\end{cor}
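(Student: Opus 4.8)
The plan is to read the bound off directly from Theorem~\ref{thm:mainresult}. That theorem bounds each of the three coordinate ranges separately, and the overall grid size is governed by the largest of them. Comparing the three expressions $16n^2\cdot 23.083^n$, $2n\cdot 8.107^n$, and $16n^4\cdot 187.128^n$, I would observe that both the exponential base and the polynomial prefactor are largest for the $z$-coordinate, so the size is dominated by $16n^4\cdot 187.128^n$. The first step is therefore simply to identify this $z$-coordinate bound as the dominant term, so that it suffices to express it in the form $O(2^{cn})$.

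The second step is to rewrite the dominant exponential in base $2$. Taking logarithms, $187.128^n = 2^{n\log_2 187.128}$, and a short calculation gives $\log_2 187.128 \approx 7.548$, which is strictly smaller than $7.55$. This strict inequality leaves an exponential slack of $2^{(7.55-7.548)n}=2^{0.002n}$, into which the polynomial prefactor $16n^4$ can be absorbed: since $16n^4 = O(2^{0.002n})$, one obtains $16n^4\cdot 187.128^n = O(2^{7.55n})$. Combining this with the (smaller) bounds for $x_i$ and $y_i$ then yields the claimed size $O(2^{7.55n})$ for all coordinates simultaneously.

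The only point requiring care---rather than a genuine obstacle---is the numerical check that $\log_2 187.128 < 7.55$ with enough margin to swallow the polynomial factor. I would verify this by checking $187.128 < 2^{7.55}$ directly (equivalently $187.128/128 < 2^{0.55}\approx 1.4641$, and indeed $187.128/128 \approx 1.462$), which confirms the required slack. There is no structural difficulty here: the corollary is an immediate consequence of packaging the explicit coordinate bounds of Theorem~\ref{thm:mainresult} into a single asymptotic expression, with the polynomial factors being harmless because the true exponential rate lies strictly below the stated one.
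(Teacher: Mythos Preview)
Your proposal is correct and matches the paper's approach: the corollary is stated without proof, being an immediate consequence of Theorem~\ref{thm:mainresult}, and you have spelled out exactly the calculation the reader is meant to supply---identifying the $z$-bound $16n^4\cdot 187.128^n$ as dominant and verifying $\log_2 187.128 < 7.55$ so that the polynomial prefactor is absorbed.
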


Let us add some remarks on the running time of the embedding algorithm. If we know the substitution stresses the computation of the location of the outer face can be done in constant time. The same is true for the scaling factors. Once we computed the plane embedding, the lifting can be computed face by face, which needs in total $O(n)$ steps.
The computation of  the substitution stresses and of the interior vertices can be done by solving a linear system. Since its underlying structure is planar, we can use nested dissections   based on the planar separator theorem to solve it~\cite{lrt-gnd-79,lt-apst-80}. This implies that a solution can be computed in $O(M(\sqrt{n}))$ time, where $M(n)$ is the upper bound for multiplying two $n\times n$ matrices.  The current record for $M(n)$ is $O(n^{2.325})$ which is due Coppersmith and Winograd~\cite{cw-mmap-90}. Thus the overall running time is given by $O(n^{1.163})$ arithmetic operations.

\section{An Example: the Dodecahedron}
\label{sec:example}
The regular dodecahedron is one of the five Platonic solids. It has
$20$ vertices, $30$ edges and $12$ faces, which are regular
pentagons. Figure~\ref{fig:dodecahedron} shows the graph and a
3-dimensional realization of it.  It is the smallest polytope without triangles and quadrilateral faces, and thus
we have to
apply the more involved cases. 
\begin{figure}[ht]
 \center 
 \begin{tabular}{ccc}
 \raisebox{5mm}{\includegraphics[scale=.34]{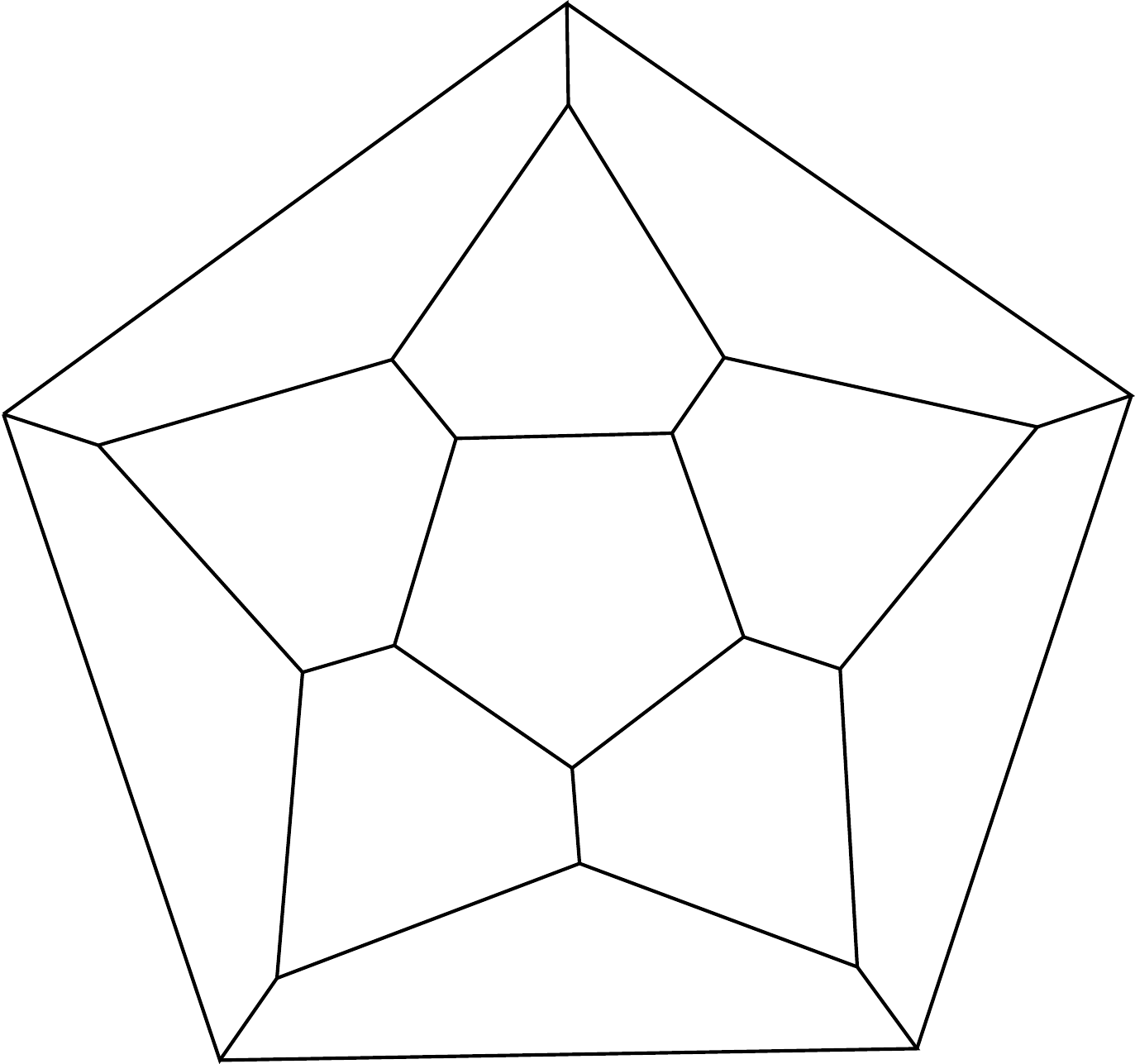}}  &
  \includegraphics[scale=1]{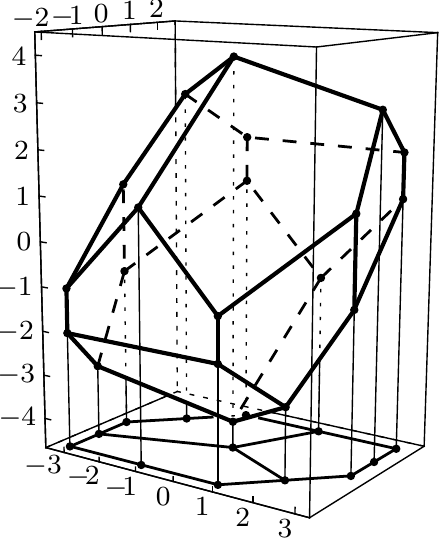} &
 \raisebox{4mm}{\includegraphics[scale=0.8]{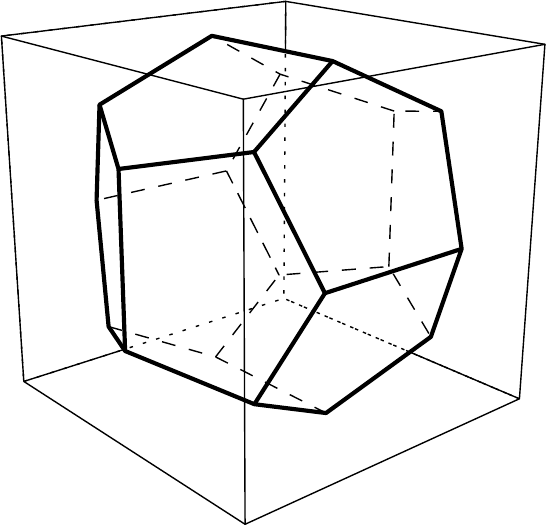}}  \\
     (a) &  (b)&  (c)
\end{tabular}
 \caption{(a) The graph of the dodecahedron and (b,c) two realizations as 
   3-polytope.}
    \label{fig:dodecahedron}
\end{figure}
Since the dodecahedron is symmetric it makes no difference which face we choose as the outer face.
We start the computation with calculating the $\ot$ values. Remember, these values are the off-diagonal entries of the matrix $-(L_{BB}-L_{BI}{\bar L}^{-1}L_{IB})$. We obtain
for all the stresses $\ot_{13},\ot_{14},\ot_{24},\ot_{25}$ and
$\ot_{35}$ the value $36/449$. The fact that all these stresses have the
same value is again due to the symmetry of the dodecahedron.
Since the outer face is a $5$-gon, 
 $G$ is of type $5A$ or $5B$.
Evaluating \eqref{equ:case1} shows that the graph is of type $5A$. With help of the substitution stresses
we compute the coordinates of the boundary vertices. We obtain
\[\mathbf{p}_1=\binom{ 0 }{ 0 },\mathbf{p}_2=\binom{ 1 }{ 0},
 \mathbf{p}_3=\binom{ 1 }{ 1 } ,\mathbf{p}_4=\binom{ 0 }{
   1},\mathbf{p}_5=\binom{ -1/3 }{ 1/2}. \]
We apply Tutte's method to compute the coordinates of the interior points. The result is depicted in 
Figure~\ref{fig:dodecatutte}.
\begin{figure}[ht]
 \center 
   \includegraphics[scale=0.6]{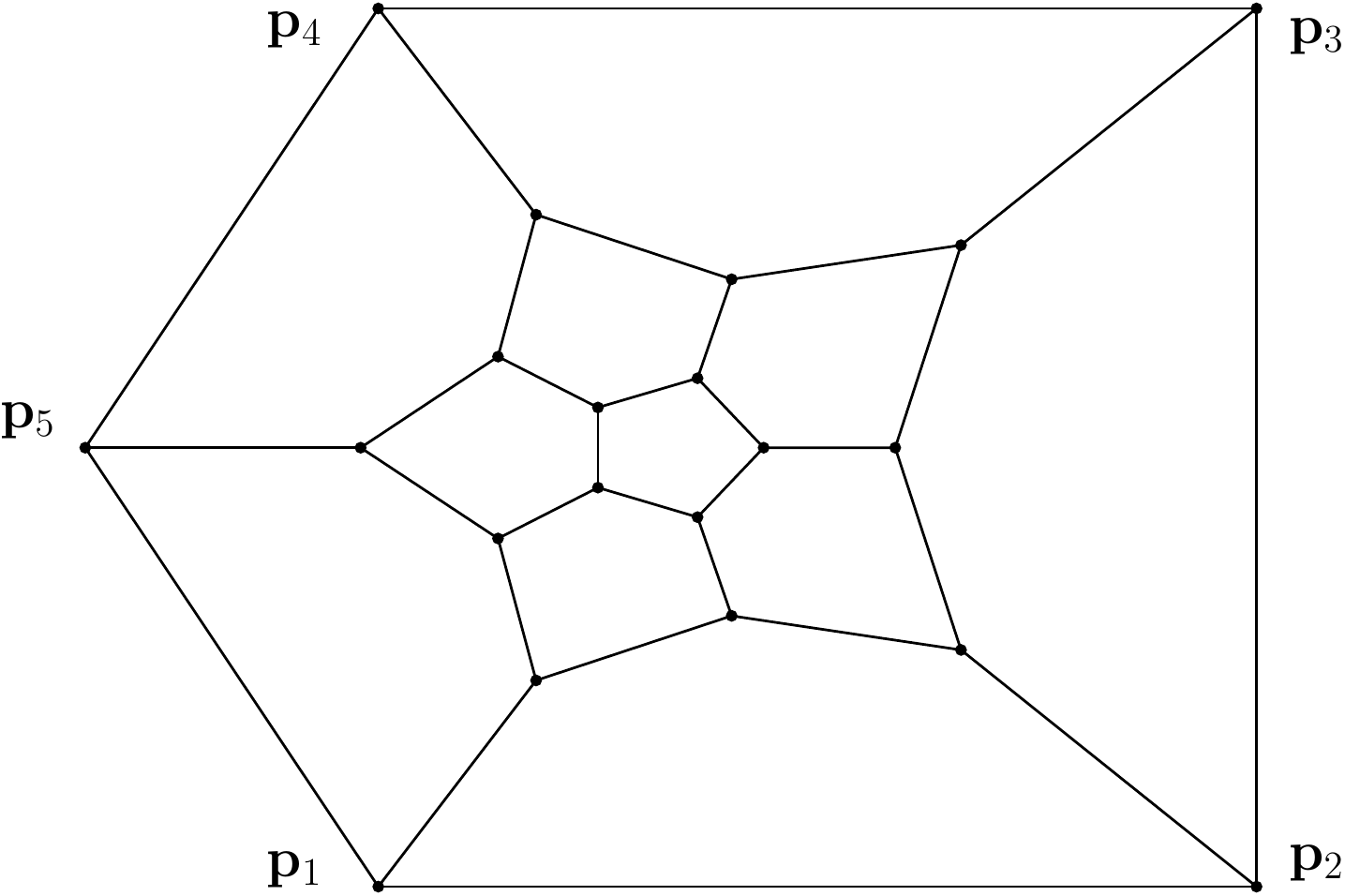} 
    \caption{The barycentric (plane) embedding of the dodecahedron.}
    \label{fig:dodecatutte}
\end{figure}
Next, we scale the 2d-embedding  as
described in Section~\ref{sec:scaling}. We obtain $\det \bar L=403202$. 
This yields the scaling factors
\begin{align*}
  \bar{S}_x =  1\,264\,158\,727\,403\,904, \quad
  \bar{S}_y =  26\,069\,428\,512. 
\end{align*}
We continue with the lifting of the plane embedding to $\mathbb{R}^3$. 
The faces are lifted incrementally as described in
Section~\ref{sec:lifting}. The numeric data of the lifting are listed in \cite{s-lpgri-08}.
Figure~\ref{fig:dodecaresult} shows the computed embedding. We have scaled down the
$z$-coordinates to obtain an illustrative picture.
The highest absolute coordinate is 
\[|z_{3}|= 11\,083\,163\,098\,782\,678\,334\,820\,352 \approx 2^{83.19},\] 
which is smaller than the bound $2^{151}$ from Corollary~\ref{cor:3d}.

\begin{figure}[htb]
 \center 
\includegraphics{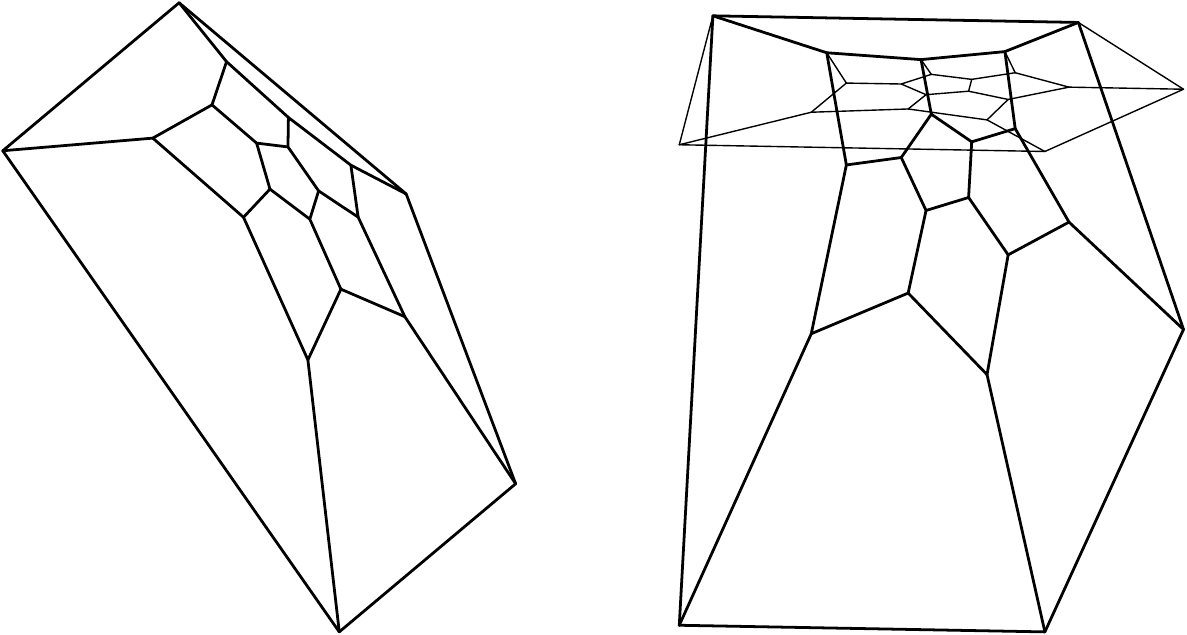} 
    \caption{Two views of the dodecahedron embedded with our algorithm, with scaled $z$-axis.
      The right picture includes also the equilibrium-stressed plane embedding.}
    \label{fig:dodecaresult}
\end{figure}

The computed embedding allows a smaller integer realization. Due to
the fact that the greatest common divisor of the $x$-coordinates is
$938\,499\,426\,432=449^3\times 10365$ and the greatest common divisor
of the $y$-coordinates is $29\,030\,544=449^2\times 144$, scaling down
by these factors yields a smaller integer embedding. We obtain an
integral plane embedding on the grid $[-27,1347]\times[0,898]$.  The
corresponding $z$-coordinates range between 0 and $406\,497$.  This
reduction is due to the fact that all substitution stresses $\ot$ are
equal. Thus one might have replaced them by $\ot\equiv 1$ in the subsequent calculations.

A much smaller grid embedding of the dodecahedron was constructed by hand by
Francisco Santos.
%
%
%
%
 It is centrally symmetric and fits inside a $6\times 4 \times 8$
box, see Figure~\ref{fig:dodecahedron}(b).
It is hard to believe that a smaller realization would be possible.
Another, more symmetric, realization of the dodecahedron is the
pyritohedron (one of the possible crystal shapes of the mineral
pyrite), as pointed out to us by G\'abor G\'evay.  It fits in a
$12\times12\times12$ box, see Figure~\ref{fig:dodecahedron}(c).  It has 8 vertices of the form
$(\pm4,\pm4,\pm4)$, plus 12 vertices, which are the 4 vertices of the form
$(0,\pm3,\pm6)$ and their cyclic rotations of the coordinates.  The
normals of the 12 faces are the vectors $(0,\pm2,\pm1)$ and their
cyclic rotations.

\paragraph{
Acknowledgements.}
We thank a referee for a very thorough reading of the manuscript.

\bibliographystyle{abbrv}
\bibliography{3-polytopes}

\end{document}